\documentclass [12pt]{article}

\usepackage{amsmath,amsthm,amsfonts,amscd,latexsym,amssymb,amsbsy,dsfont,mathrsfs,color}
\usepackage[small, centerlast, it]{caption}
\usepackage{epsfig}
\usepackage[titles]{tocloft}
\usepackage[latin1]{inputenc}
\usepackage{verbatim} 
\usepackage[active]{srcltx} 
\usepackage{fancyhdr}
\usepackage{caption}
\usepackage{enumerate}

\oddsidemargin 0cm      
\evensidemargin 0cm     
\headsep 20pt            
\textheight 21.5cm        
\textwidth 16cm         


\def\b1{{1\!\!1}}

\def\cB{\mathscr{B}}

\def\cI{{\ca I}}

\def\cL{\mathscr{L}}
\def\cM{\mathscr{M}}

\def\cS{\mathscr{S}}

\def\sM{{\mathsf M}}

\def\sH{{\mathsf H}}

\def\bC{{\mathbb C}}           

\def\bH{{\mathbb H}}

\def\bD{{\mathbb D}}
\def\bN{{\mathbb N}}

\def\bR{{\mathbb R}}

\def\gB{{\mathfrak B}}

\def\beq{\begin{eqnarray}}
\def\eeq{\end{eqnarray}}

\newcommand{\ca}[1]{{\cal #1}}         


\usepackage{sectsty}
\sectionfont{\normalsize}
\subsectionfont{\normalsize\normalfont\itshape}
\setlength{\cftbeforesecskip}{0.3mm}



\newtheoremstyle{plain}
{5pt}
{9pt}
{\itshape}
{}
{\itshape\bfseries}
{}
{1em}
{}

\theoremstyle{thm}
\newtheorem{theorem}{\em Theorem}[section]
\newtheorem{lemma}[theorem]{\em Lemma}
\newtheorem{corollary}[theorem]{\em Corollary}
\newtheorem{proposition}[theorem]{\em Proposition}
\newtheorem{definition}[theorem]{\em Definition}
\newtheorem{remark}[theorem]{\em Remark}


\begin{document}


\par
\bigskip
\large
\noindent
{\bf  The correct formulation of Gleason's theorem in quaternionic  Hilbert spaces}
\bigskip
\par
\rm
\normalsize


\noindent  {\bf Valter Moretti$^{a}$}, {\bf Marco Oppio$^{b}$}\\
\par

\noindent 
 $^a$ Department of  Mathematics University of Trento, and INFN-TIFPA \\
 via Sommarive 15, I-38123  Povo (Trento), Italy.\\
valter.moretti@unitn.it\\

\noindent   $^b$ Faculty of Mathematics,
University of Regensburg\\
Universit\"atsstrasse 31,
93053 Regensburg, Germany \\
marco.oppio@ur.de\\

 \normalsize

\par

\rm\normalsize

\noindent {\small  September, 6  2018}

\rm\normalsize


\par
\bigskip

\noindent
\small
{\bf Abstract}. 
Quantum Theories can be formulated in real, complex or quaternionic Hilbert spaces  as established in Sol\'er's theorem.  Quantum states are here pictured in terms of $\sigma$-additive probability measures over the  non-Boolean  lattice of orthogonal projectors of the considered Hilbert space.   Gleason's theorem proves that, if the  Hilbert space is either real or complex and some technical 
hypothes are true,  then these measures are  one-to-one with standard density matrices used by physicists recovering and motivating  the familiar notion of state.  The extension of this result to quaternionic Hilbert spaces was obtained by Varadarajan in 1968. Unfortunately,   the formulation of this extension \cite{V2a} is partially mathematically incorrect due to some peculiarities of the notion of trace  in quaternionic Hilbert spaces. A minor issue also affects Varadarajan's statement for real Hilbert space formulation. This  paper is devoted to present Gleason-Varadarajan's theorem into a technically correct and physically meaningful form valid for the three types of Hilbert spaces. In particular, we prove that only the {\em real part} of the trace enters the formalism of quantum theories (also dealing with unbounded observables and symmetries) and it can be safely used to formulate and prove a common statement of Gleason's theorem.  

\normalsize
\newpage

\tableofcontents

\section{Introduction}\label{sec1} 
\subsection{Gleason's theorem and troubles with  the quaternionic formulations}\label{sec1.1}
The idea to formulate Quantum Theories starting from the  partially ordered set  of {\em elementary propositions} can be traced back to  Birkhoff and von Neumann  \cite{BiNe}  with crucial contributions by Mackay \cite{Mackey} (see \cite{librone} for a review  on these issues).
From a physical point of view, these elementary propositions are the statements which can be physically tested on a quantum  system,  receiving either  the outcome $0$ (not valid) or the outcome $1$ (valid). In particular, every physical quantity  $A$ which can be measured on the system can be pictured as a {\em collection} of such elementary propositions $P^{(A)}_E$ labelled by sets of reals $E$ like an interval $E=(a,b)$. The physical meaning of $P^{(A)}_E$ is that {\em the outcome  of the observable $A$ belongs to $E$}.
The elementary propositions of a quantum system  obey a logic different from the classical one in view of the presence of physically {\em incompatible} statements (in the sense of Heisenberg principle). Von Neumann's idea was that these elementary propositions are one-to-one described by the orthogonal projectors of a complex Hilbert space $\sH$ associated to the quantum system
 and a pair of such elementary propositions are physically incompatible if and only if the corresponding projectors do not commute.  The set $\cL(\sH)$ of orthogonal projectors over $\sH$ enjoys the structure of a non-Boolean  {\em  lattice} with respect to the ordering relation given by the set-theoretic inclusion of projection subspaces. 
In particular,  the collection of elementary propositions associated to a physical quantity $A$ as above are supposed to be pairwise compatible and the analysis of the mutual relations between these propositions (see, e.g., \cite{V2a,M}) shows that they  give rise to  the structure of a {\em projection-valued measure} over $\bR$. At this point, the spectral machinery, integrating this projection-valued measure, permits one  to associate the abstract observable $A$  to a corresponding selfadjoint operator, motivating from a deeper viewpoint the standard assumption by physicists that observables are described by selfadjoint operators. 
It is also possible to assume an even more abstract viewpoint, where the elementary propositions are treated as elements of an abstract lattice with specific  mutual properties reflecting the general phenomenology of quantum systems, without explicitly referring to the orthogonal projectors (the notion of {\em commutativity} of pairs of elements can be generally defined for  abstract lattices exploiting  the notion of Boolean sublattice). Adopting this quite abstract point of view, it turns out that, in addition to the lattice of orthogonal projectors over a {\em complex} Hilbert space, also  the lattices of orthogonal projectors over either a {\em real} and a {\em quaternionic} Hilbert space fulfill the requirements which, in principle,  may be  justified by  the quantum phenomenology (irreduciblity, orthomodularity, $\sigma$-completeness, separablility, atomicity and validity of the so-called  covering property, see, e.g., \cite{BeCa}).
The proof of the fact that these are the {\em only three} possibilities has a long history (see, e.g., \cite{BeCa,XY, librone, M} for technical discussions of the various intermediate results), starting from several remarkable results by Piron and other few authors like F. Maeda, S. Maeda and  Araki in  the '60s,  and ending with the theorem by Sol\'er in 1995 \cite{Soler}. Assuming that the abstract lattice of quantum elementary propositions  is irreducible, orthomodular, $\sigma$-complete, separable, atomic, that it satisfies the covering property   and that it includes an infinite set of orthogonal atoms, then  the lattice is  necessarily  isomorphic to the  lattice $\cL(\sH)$  of orthogonal projectors in a separable Hilbert space  $\sH$.  There, the set of scalars may only be $\bR,\bC$ or the real algebra of quaternions $\bH$. 
 
We shall not address  here the problem of the apparent absence of physical systems described in real Hilbert spaces \cite{S1,S2,MO1} and the possibility (or impossibility) of quaternionic formulations   \cite{foudationofquaternionicmechanics,Adler,Gan17,MO2}. Instead, we restrict attention to a celebrated result, provided by Gleason's theorem \cite{G}, regarding the notion of {\em quantum state}.
The lattice structure of  $\cL(\sH)$ suggests a physically natural definition of a quantum state  (at a given instant of time $t$) defined as a map associating every elementary proposition $P\in \cL(\sH)$ with the probability that it results to be true if measured.   It should be evident that 
this is a good notion of state, as it includes all information necessary to compute probabilities, expectation values, standard deviations and so on for every observable (selfadjoint operator) viewed as the family of its spectral projectors (elementary propositions).
Formally, a state is  a generalized $\sigma$-{\em additive probability measure} over $\cL(\sH)$, that is a map  
$\mu : \cL(\sH) \to [0,1]$ satisfying both  $\mu(I) = 1$, and
 $\sum_{k\in K} \mu(P_k) = \mu\left( \vee_{k\in K} P_k\right)$ for every set $\{P_k\}_{k\in K} \subset \cL(\sH)$ with $K$ finite or countably infinite, such that $P_k$ and $P_h$ are orthogonal when $k\neq h$.
That this definition is in agreement with the standard notion of state familiar to  physicists was established  by Gleason with his celebrated theorem.

\begin{theorem}[{\bf Gleason's theorem}]\label{teoG}  
	Let $\sH$ be a  real or complex separable Hilbert space with dimension greater than $2$. The class of  $\sigma$-additive probability measures  $\mu$ over $\cL(\sH)$ is one-to-one  with the class of  self-adjoint, positive, trace-class operators $T$ with unit trace. This correspondence  is defined by the requirement  
	\begin{equation}\label{trace}
	 \mu(P) = tr(PT)\:,
	 \end{equation}
	for every choice of $P \in \cL(\sH)$.
\end{theorem}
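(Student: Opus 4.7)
The easy direction is to verify that any self-adjoint, positive, trace-class operator $T$ with $tr(T)=1$ defines a $\sigma$-additive probability measure via $\mu(P):=tr(PT)$. Normalization is immediate; positivity follows from $tr(PT)=tr(PTP)\geq 0$; $\sigma$-additivity follows from expanding the trace in an orthonormal basis adapted to a countable mutually orthogonal family $\{P_k\}_{k\in K}$ and applying monotone convergence to the resulting nonnegative series. Injectivity of the correspondence is easy: if $tr(PT_1)=tr(PT_2)$ for all rank-one $P$, then polarization forces $T_1=T_2$.

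The heart of the theorem is the converse. My plan follows Gleason's original strategy: reduce to $\dim \sH=3$ and then globalize. Fix a $\sigma$-additive measure $\mu$ on $\cL(\sH)$. On a $3$-dimensional real or complex Hilbert space, I would introduce the \emph{frame function} $f:S\to[0,1]$ on the unit sphere defined by $f(x):=\mu(P_x)$, where $P_x$ projects onto $\mathrm{span}(x)$; finite additivity applied to any orthonormal basis $\{e_1,e_2,e_3\}$ gives $f(e_1)+f(e_2)+f(e_3)=\mu(I)=1$. The goal is to prove $f(x)=\langle x,Tx\rangle$ for a unique positive self-adjoint $T$ with $tr(T)=1$; the identity $\mu(P)=tr(PT)$ for every $P\in\cL(\sH)$ in dimension $3$ would then follow by spectral decomposition of $P$.

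I expect the main obstacle to be the classical core of Gleason's argument: proving that every bounded nonnegative frame function on $S^2$ is continuous, and hence regular. Continuity is established by an intricate geometric argument controlling the oscillation of $f$ by rotating pairs of orthonormal frames through small patches on the sphere; this is where the hypothesis $\dim\sH>2$ is essential, as frames on $S^1$ are too rigid/too loose to transport such estimates. Once continuity is in hand, one decomposes $f$ into spherical harmonics and uses the frame-sum identity to kill all components except those of degrees $\ell=0$ and $\ell=2$, producing $f(x)=a\|x\|^2+\langle x,T_0 x\rangle$ with $T_0$ self-adjoint and traceless; in the complex case an additional phase-invariance argument $f(e^{i\theta}x)=f(x)$ eliminates antisymmetric contributions and yields a Hermitian form.

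Finally, to pass from dimension $3$ to a separable Hilbert space $\sH$ of arbitrary (possibly infinite) dimension, I would apply the $3$-dimensional result to every closed $3$-dimensional subspace $V\subset \sH$, obtaining a positive operator $T_V$ on $V$ with $\mu(P)=tr(PT_V)$ for $P\in\cL(V)$. Embedding pairs of such subspaces into common subspaces of dimension $\geq 4$ and invoking uniqueness in dimension $3$ forces the $T_V$ to be mutually compatible, so they glue into a globally defined positive sesquilinear form on $\sH$, hence a densely defined positive operator $T$. Choosing any orthonormal basis $\{e_n\}$ and applying $\sigma$-additivity of $\mu$ to the spectral family of $T$ gives $\sum_n \langle e_n,Te_n\rangle=\mu(I)=1$, so $T$ is trace class; the identity $\mu(P)=tr(PT)$ then extends from finite-rank $P$ to all $P\in\cL(\sH)$ by a final $\sigma$-additivity/monotone convergence argument applied to an eigenbasis of $P$.
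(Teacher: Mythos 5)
Your outline is the classical Gleason route and it is essentially sound in structure, but note that the paper does not actually prove Theorem \ref{teoG}: it is quoted as a known result, and the corresponding hard direction in the paper's own generalization (Theorem \ref{GVt}) is discharged by citing Lemma 4.22 of \cite{V2}, which is precisely the frame-function representation theorem you set out to establish. Concretely, the paper's strategy for the converse coincides with the outer shell of yours --- define the frame function $f_\mu(x)=\mu(x\langle x|\cdot\rangle)$ of weight $1$, obtain a bounded selfadjoint $T\geq 0$ with $f_\mu(x)=\langle x|Tx\rangle$, deduce $T\in\gB_1(\sH)$ with $tr(T)=1$ from $\sum_{u\in N}f_\mu(u)=1$, and recover $\mu(P)=tr(PT)$ by applying $\sigma$-additivity to a Hilbert basis of $P(\sH)$ --- but the representation step is black-boxed. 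What you add is a sketch of the inside of that black box: continuity of bounded nonnegative frame functions on $S^2$, the spherical-harmonics reduction to degrees $0$ and $2$, and the gluing of the three-dimensional operators. That is indeed where all the difficulty of Gleason's theorem lives, and your description of each stage is the correct one; however, you name these steps rather than carry them out, so as written the proposal is an accurate roadmap rather than a proof --- the oscillation estimate proving continuity is the genuinely hard content and nothing in your text substitutes for it. Two small points: in the globalization the operator $T$ obtained from the positive form bounded by $\|x\|^2$ is automatically bounded (with $\|T\|\leq 1$), not merely densely defined; and in the real case the injectivity step needs more than polarization of a vanishing quadratic form (which does not force an operator to vanish there), namely the observation that $T_1-T_2$ is selfadjoint --- which it is, so the conclusion stands.
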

 This way,  the notion of {\em statistical operator} $T$ (self-adjoint, positive, trace-class operators with unit trace)
enters the mathematical formulation of Quantum Theories. 
However, also states  represented by normalized vectors of $\sH$,  show up.
To this end, first observe that a {\em convex} combination $p\mu+q\mu'$ of probability measures  (i.e., of statistical operators $pT+qT'$) -- where $p+q=1$ and $p,q\in [0,1]$ by definition of convex combination --  is still a probability measure (a statistical operator).
 Among the class probability measures over $\cL(\sH)$ there are elements, defining the  so-called {\em pure states}, which cannot be decomposed into  non-trivial convex combinations of  probability measures. They are by definition the {\em extremal elements} of the convex body of the said measures. It is easy to prove that they are the usual vector states  of the standard formulation familiar to physicists, i.e.,  the associated statistical operators have  the form 
$T = \psi\langle \psi| \cdot\rangle$, where $||\psi||=1$. The spectral theorem finally proves that a generic statistical operator is always  a (generally infinite)  convex combinations of pure states, i.e., a {\em quantum mixture} or an {\em incoherent superposition} of pure states  familiar to physicists.

\begin{remark}$\null$
{\em \begin{itemize}
\item[(a)] It is worth stressing that pure states 
are unit vectors {\em up to signs} in the real formulation, whereas they
are unit vectors {\em up to phases} in the complex formulation. So that, in particular, real quantum mechanics does not coincide with  {\em decomplexified} complex quantum mechanics where pure states would be unit vectors up to $SO(2)$ rotations.
\item[(b)] 
The statement of Gleason's theorem is also trivially true for $\dim(\sH)=1$, whereas the constraint $\dim(\sH)\neq 2$ cannot be removed, since well-known counterexamples can be constructed (see, e.g., \cite{M}).  There exist several extensions of Gleason's result towards various directions, e.g., dealing with a lattice of orthogonal projectors of a von Neumann algebra in $\sH$ instead of the whole $\cL(\sH)$, or relaxing $\sigma$-additivity, or positivity  of $\mu$ or separability of $\sH$. Remarkably,  the requirement $\dim(\sH)\neq 2$ or a corresponding constraint in terms of von Neuman algebras of definite type (type-$I_2$ is forbidden) survives all extensions.
 An exhaustive survey on the subject is \cite{libroGleason}. In the rest of the paper,  we stick to the elementary version represented by the statement given above of Gleason's theorem. 
\item[(c)] 
The most important  physical consequence of Gleason theorem is proving that the idea of states viewed from skratch  as probability measures over the lattice of elementary is in  agreement with the picture, more  familiar to physicists,  where   the building blocks are pure (vector) states and statistical operators are secondary objects. 
Failure of Gleason's theorem would imply that there are two inequivalent notions of state in the elementary  formulation of Quantum Theories in Hilbert space.
\item[(d)] 
In physically important situations where not all the selfadjoint operators represent observables, as  in the presence of superselection rules or gauge groups, the idea of states as probability measures over the {\em restricted} sublattice of physically meaningful projectors reveals to be more useful than the standard approach based on vectors states and their quantum mixtures. As is known, in the presence of superselction rules, {\em many} different state vectors and  statistical operators may contain the same information. Conversely there is {\em exaclty  one} probability measure over the projectors representing the elementary propositions permitted by the superselection rules  associated to a class of equivalent statistical operators and vector states. It is possible to extend  Gleson's theorem to encompass these cases  in order to classify all possible statistical operators associated to a given probability measure when, for instance, superselction rules affect the theory (e.g., see \cite{M}).
\item[(e)]  Within the framework based on the notion of lattice $\cL(\sH)$ of elementary propositions and 
 on the idea of quantum states  viewed as  probability measurs, the notion of trace plays a crucial tole. It  is in fact  a general mathematical tool useful to charcterize the structure of the probabilty measures over $\cL(\sH)$ somehow extending  the notion of integral over non Boolean algebras. However, it has not an {\em a priori}  direct physical meaning which would have if reversing the construction as in the  formulation more familar to physicists, where the fundamental objects are pure states and the operation of trace is often introduced with physical motivations to describe loss of coherence or (partial trace) loss of information on subsystems.  
\end{itemize}}
\end{remark}

All the discussion above concerned the real and the complex Hilbert space cases. Let us pass to focus on the case of a quaternionic Hilbert space which is also permitted as a model of elementary propositions. Quantum states can be defined as probability measures over $\cL(\sH)$  as well. Are these probability measures one-to-one represented by statistical operators?
Differently from what erroneously asserted in \cite{V2a,V2}, the formulation of Gleason's result cannot directly encompass the quaternionic Hilbert space case in the form stated in Theorem \ref{teoG}. This is because,
\begin{enumerate}[{\bf (A)}]
\item in quaternionic Hilbert spaces the notion of trace is generally {\em basis dependent} as noticed in \cite{T} and  \cite{CGJ16}, unless its argument is selfadjoint, so that Eq.(\ref{trace}), where $PT$ is generally {\em not} selfadjoint, needs further specification. (As a trivial example of basis dependence, using the notation of  Sect.\ref{secnot} below, consider the algebra of quaternions as one-dimensional quaternionic Hilbert space with scalar product 
$\langle q|q' \rangle := \overline{q}q'$. Both $1$ and the first imaginary units $i$ are Hilbert bases. The operator $Jq := jq$ for $q \in \bH$ have different traces: $\langle 1|J1\rangle = j$ and $\langle i|Ji\rangle = -j$.)

\item  $tr(PT)$ in Eq.(\ref{trace}) would  produce quaternions rather than reals in $[0,1]$ {\em as it should be since $tr(PT)$ has the meaning of a probability}. This is because the {\em cyclic property} of the trace $tr(PT)=tr(TP)$ generally {\em fails} for quaternionic Hilbert spaces so that selfadjointness of $T$ and $P$ is not enough to guarantee $tr(PT)\in \bR$ (whereas  it is sufficient  in the complex case: $\overline{tr(PT)} = tr(T^*P^*) =tr(TP) = tr(PT)$).
 
\item A minor issue, already noticed in \cite{MO1},  also affects the real-Hilbert space case in Varadarjan's formulation, since the very definition of trace-class (bounded) operator  $A: \sH \to \sH$ adopted in \cite{V2} for the three types of Hilbert spaces:
\beq \sum_{x\in N} |\langle x|A x\rangle| < +\infty\quad \mbox{for every Hilbert basis $N \subset \sH$}\label{condwrong}\eeq
 is ineffective in the real case. In  infinite-dimensional real Hilbert spaces, there are nonvanishing bounded operators $A: \sH \to \sH$ such that $\langle z|A z\rangle =0$ for every $z\in \sH$ so that (\ref{condwrong}) holds, but  it is {\em false} that 
 \begin{equation}\label{condwrong2}
 \sum_{x\in N} \langle x||A| x\rangle< +\infty\quad \mbox{for every Hilbert basis $N \subset \sH$}\:.
 \end{equation}
  (It is sufficient to construct $A$ such that $A^*=-A$ and $AA=-I$ which entails that $|A|=I$ so that the left-hand side of (\ref{condwrong}) diverges for every choice of $N$.) Even if (\ref{condwrong}) implies that some notion of trace can be defined, we cannot drop requirement (\ref{condwrong2}) since it guarantees that the standard properties of trace-class operators are valid as we shall discuss in Proposition \ref{propTRACE1}.
\end{enumerate}

 If $\sH$ is complex or quaternionic,  the counter-example in (C) does not work because (see Exercise 3.21 in \cite{M} for $\bD=\bC$ and Proposition 2.17 (a) \cite{GMP1} for $\bD=\bH$)  $\langle z|A z\rangle =0$ for every $z\in \sH$ imply $A=0$ and the problem disappears. In real Hilbert spaces, this result is achieved only if $A=A^*$, it easily follows  from
$2\langle x|Ay \rangle +  2\langle y|Ax \rangle= \langle x+y|A(x+y) \rangle - \langle x-y| A(x-y) \rangle$.
In complex and quaternionic Hilbert spaces actually (\ref{condwrong2}) and (\ref{condwrong}) are equivalent, as we shall discuss shortly.

In spite of these technical difficulties, the statement in \cite{V2a,V2} seems  physically safe since  everything can be re-arranged in physical applications \cite{V2a,V2} to obtain the expected results. Physical soundness  
 together with the absence of a sufficiently advanced rigorous formulation of  quaternionic Hilbert space  functional analysis  when the book  was written (1968)
are, in our view,  the reasons why these problems did not produce consequences into the physical literature. Furthermore,  the central and hard  part of Varadarajan's proof  is correct, and the final statements can be fixed into a formulation of  Gleason's theorem, presented as Theorem \ref{GVt},  that is valid for the three kinds of Hilbert spaces. This is the main goal of this paper. The key tool to formulate this common version of the  theorem is the notion of {\em real trace} that will be introduced shortly. After the formulation of the theorem we shall show how this notion is in complete agreement with the standard formalism used by physicists.

\subsection{Known results and notations}\label{secnot}
The real associative division algebra of {\bf quaternions}  is   $\bH =\{a+ bi + cj + dk\:|\: a,b,c,d \in \bR\}$. The three imaginary unitis $i,j,k$ pairwise anticommute, satisfy the well-known relations  $i^2=j^2=k^2=-1$,
and $ij = k$ and cyclic permutations of it. The {\bf quaternionic  
conjugation} is defined as $\overline{a+ bi + cj + dk}= a- bi - cj - dk$ for $a,b,c,d \in \bR$. Notice that $\overline{pq}= \overline{q}\:\overline{p}$ if $p,q \in \bH$. The {\bf real part} of a quaternion is therefore $Re(q) = \frac{1}{2}(q+ \overline{q})$.  
Notice that $Re(qq') = Re(q'q)$ for $q,q'\in \bH$ as it immediately arises from a direct computation.
$\bH$ is a real Banach  space if referring to the natural {\bf norm} $|a+ bi + cj + dk|:= \sqrt{a^2+b^2+c^2+d^2}$ that also satisfies $|pq|=|p|\:|q|$ and $|q| = \sqrt{\overline{q}\:q} = |\overline{q}|$ if $p,q\in \bH$. In particular $\bH$ is a {\em unital real $C^*$-algebra}, the $^*$ operation being the quaternionic conjugation.

 We assume that the reader is familiar with standard notions of Hilbert space theory (including linear operators thereon) both for the real  (see, e.g., \cite{Brezis}) and complex case (see, e.g., \cite{M}) and we pass to introduce the quaternionic case.

We henceforth use the notion of quaternionic vector space and its properties as defined in \cite{GMP1}. It is worth stressing that, 
in \cite{GMP1,GMP3,GMP2}, the multiplication of a vector of a quaternionic Hilbert space  $x\in \sH$ and a quaternion $q\in \bH$ was supposed to act {\em on the right},
$$\sH \times \bH \ni  (x,q) \mapsto xq \in \sH\:.$$
 In view of non-commutativity of quaternions, this choice is compulsory  as soon as one requires that the inner product of a Hilbert space is linear in the {\em right entry}, as generally  assumed in mathematical physics literature.
 \begin{remark}\label{remsom} {\em  Since we want to use a common notation for real, complex, and quaternionic Hilbert spaces, we therefore adopt a right multiplication of scalars and vectors instead of the standard left one, also in real and complex Hilbert spaces.  With this choice, linearity is written, if $\sH$ is a real, complex or quaternionic vector space, as
$$A(xq) = (Ax)q \:, \quad x \in \sH,\: q \in \mbox{$\bR$ of $\bC$ or $\bH$ respectively.}$$
Linear combination of operators are defined according to this convention. However, 
in case of quaternionic Hilbert spaces, linear combinations of operators are therefore permitted only if the coefficients of the combination are real, since only reals in $\bH$ commute with all quaternions. We stick to the standard notation $aA + bB$ to denote linear combination of (right-linear) operators $A, B : \sH \to \sH$ 
 $$(aA + bB)(x) := Axa + Bxb\:,$$
where $x \in \sH$ and $a,b \in \bR$ for both real or quaternionic Hilbert spaces, whereas $a,b \in \bC$ for complex Hilbert spaces.}
\end{remark}
 A {\bf scalar product} $\langle \cdot |\cdot \rangle : \sH \times \sH \to \bH $ over a quaternionic vector space $\sH$ is, by definition, $\bH$-linear in the right entry $\langle x| yq+ zp\rangle = \langle x| y\rangle q + \langle x| z\rangle  p$, positive $\langle x|x\rangle \geq 0$ with $x=0$ if  $\langle x|x\rangle =0$, and Hermitian $\langle x|y\rangle = \overline{\langle y|x\rangle}$, for every $x,y \in \sH$, $p,q \in \bH$.  \\$\sH$ is a {\bf Hilbert space} if  the {\bf norm} $||x|| := \sqrt{\langle x|x\rangle}$  makes  $\sH$  complete as a metric space.  

 We henceforth denote by $\bD$ the set of scalars of the Hilbert space $\sH$ which can be {\em real} ($\bD=\bR$), {\em complex} ($\bD=\bC$), or {\em quaternionic} ($\bD=\bH$).
 The scalar product is henceforth denoted by  $\langle \:\:|\:\:\rangle$ in the three types of Hilbert space.

The maps $ \bD \ni r  \mapsto Re(r) \in  \bR$ and $\bD \ni q  \mapsto \overline{q} \in \bD$ are therefore defined in the three cases with the standard meaning for $\bD=\bC$  and reducing to the identity map if $\bD=\bR$.

If $\bD=\bH$,
the notion of {\em Hilbert basis},  {\em orthogonal complement}, {\em bounded operator},  {\em adjoint  operator}, {\em operator norm}, {\em uniform}, {\em strong} and {\em weak operator topologies}  are  defined exactly as in real or complex Hilbert spaces and enjoy identical properties with the obvious trivial changes \cite{GMP1}. The (Hermitian) adjoint of an operator  $A$ is always denoted by $A^*$.

 A {\bf unitary} operator $U : \sH \to \sH$ is a norm-preserving surjective $\bD$-linear operator. In the special case of $\bD=\bC$, an {\bf anti-unitary} operator $U : \sH \to \sH$ is an anti-$\bC$-linear norm-preserving surjective operator. By polarization, unitary operators  preserve the scalar product, while anti-unitary operators do it up to an overall complex conjugation of the scalar product ($\langle Cx|Cy\rangle = \overline{\langle x|y\rangle}$). 

If $\bD=\bH$, the set $\gB(\sH)$ of bounded operators $T : \sH \to \sH$ has the natural structure of a {\em unital real $C^*$-algebra} (the $^*$-operation being the Hermitian adjoint), since only {\em real} linear combinations of ($\bH$-linear) operators are defined. Evidently $\gB(\sH)$ is also a unital real $C^*$-algebra if $\bD=\bR$, whereas $\gB(\sH)$  is a proper unital complex $C^*$-algebra when $\bD=\bC$.

 $\cL(\sH)$ denotes the set of {\bf orthogonal projectors} over $\sH$, i.e., of operators $P\in \gB(\sH)$ such that $PP=P$ and $P^*=P$. Orthogonal projectors $P$ are one-to-one with the class of all closed subspaces $\sM$ of $\sH$ through the requirement $P(\sH) = \sM$.
$\cL(\sH)$ is an  {\em orthomodular} lattice 
 with respect to the partial ordering relation $P\geq Q$ iff $P(\sH) \supset Q(\sH)$
and the orthocomplement $P^\perp =I-P$ ($P,Q \in \cL(\sH)$). $\cL(\sH)$ is {\em complete}: the {\em supremum} $\vee_{j\in J}P_j$ exists in  $\cL(\sH)$ for every $\{P_j\}_{j\in J} \subset \cL(\sH)$ (see, e.g.,  Theorems 7.22 and 7.56 in \cite{M}, the proofs being identical in the three cases).

$T\in \gB(\sH)$ is said to be {\bf positive} when $\langle u|Tu\rangle \geq 0$ if $u\in \sH$. 
The {\em polar decomposition} $A= U|A|$ of an operator $A \in \gB(\sH)$, its {\bf absolute value} $|A| := \sqrt{A^*A}$ and the {\em positive squared root} $\sqrt{B}$ of a 
selfadjoint positive operator $B \in \gB(\sH)$ are defined for quaternionic Hilbert spaces \cite{GMP1} exactly as for real (see, e.g.,\cite{MO1}) and complex Hilbert spaces (see, e.g., Sect.3.5.2 in  \cite{M}): $\sqrt{B}$ is the unique selfadjoint positive operator in $\gB(\sH)$ such that $\sqrt{B}\sqrt{B}=B$. In complex and quaternionic Hilbert spaces, positive operators are also 
selfadjoint, since positivity implies in particular that $\langle x|Ax \rangle= \langle Ax|x \rangle$ and so
$\langle x|(A^*-A)x \rangle =0$ for every $x\in \sH$ that, in turn, leads to $A^*=A$ (see the comment just below {\bf (C)} above). In real Hilbert spaces, positivity generally does not implies selfadjointness.  
If $S,T$ are bounded operator, we say that $T\ge S$ whenever $T-S\ge 0$. It should be noticed that in the real Hilbert space case $\ge$ is \textit{not} a partial ordering relation  in $\gB(\sH)$, the antisymmetry axiom not being satisfied (unless restricting to the set of selfadjoint opertors in $\gB(\sH)$).

The notion of {\em compact} quaternionic-linear operator is identical to that in real or complex Hilbert spaces since it uses the real structure only. 
The set of compact operators will be denoted\footnote{That set was denoted by $\gB_0(\sH)$ in \cite{GMP3}.} by $\gB_\infty(\sH)$. With the same proof as for complex Hilbert spaces (Proposition 4.14 and  Theorem 4.15 in \cite{M}), it is easy to prove the following proposition irrespective of the nature of $\bD$.
\begin{proposition}\label{prop1}
	 Let $\sH$ be a real, complex or quaternionic Hilbert space. $\gB_\infty(\sH)$ is a subspace and a {\em closed two-sided $^*$-ideal} of $\gB(\sH)$. Furthermore  $T \in \gB_\infty(\sH)$ if and only if $|T| \in \gB_\infty(\sH)$. 
\end{proposition}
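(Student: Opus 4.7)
The plan is to observe that, as the authors suggest, the standard complex Hilbert space argument transplants almost verbatim to the three-case setting, because the notion of compactness is purely metric/topological and all the relevant structural tools (adjoint, polar decomposition $T = U|T|$ with $U$ a partial isometry, positive square root) have been recorded as valid in the quaternionic case in \cite{GMP1}. Throughout I would work with the characterisation that $T \in \gB(\sH)$ is compact iff $T(B_1)$ is relatively compact in $\sH$, where $B_1$ denotes the closed unit ball.

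First I would dispatch the easy pieces. \textbf{Subspace:} if $T,S \in \gB_\infty(\sH)$ and $a,b$ are admissible scalars (real in the real and quaternionic cases, complex in the complex case), then $(aT+bS)(B_1) \subset a\,T(B_1) + b\,S(B_1)$, which is relatively compact as the image of the relatively compact set $\overline{T(B_1)} \times \overline{S(B_1)}$ under the continuous map $(x,y) \mapsto xa+yb$. \textbf{Two-sided ideal:} for $A \in \gB(\sH)$, $TA(B_1) \subset T(\|A\| B_1)$ is relatively compact, and $AT(B_1) = A(T(B_1))$ is relatively compact since $A$ is continuous. \textbf{Closedness:} if $T_n \to T$ uniformly with each $T_n$ compact, total boundedness of $T(B_1)$ follows by covering $T_n(B_1)$ with finitely many $\varepsilon/2$-balls for $n$ large enough that $\|T-T_n\| < \varepsilon/2$. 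None of these arguments sees the scalar field; they only use the real normed-space structure.

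For the $^*$-ideal property and the equivalence $T \in \gB_\infty(\sH) \Leftrightarrow |T| \in \gB_\infty(\sH)$, the central identity is
\begin{equation*}
\||T|x\|^2 = \langle x | \, |T|^2 x\rangle = \langle x|T^*Tx\rangle = \langle Tx | Tx\rangle = \|Tx\|^2,
\end{equation*}
valid in all three cases because $|T|$ is self-adjoint and the definition of the adjoint is identical in each setting. Consequently $\|T(x_n-x_m)\| = \||T|(x_n-x_m)\|$ for every sequence $(x_n)$ in $\sH$, so a bounded sequence has a $T$-Cauchy subsequence iff it has a $|T|$-Cauchy subsequence; combined with completeness of $\sH$, this gives $T \in \gB_\infty(\sH) \Leftrightarrow |T| \in \gB_\infty(\sH)$. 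To see that $\gB_\infty(\sH)$ is $^*$-closed, I would invoke the polar decomposition $T = U|T|$ of \cite{GMP1}, whence $T^* = |T|U^*$; if $T$ is compact then so is $|T|$ by the equivalence just proved, and hence $T^* = |T|U^*$ is compact by the already-established (left) ideal property.

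I do not expect any genuine obstacle: the only place where the quaternionic setting could in principle intrude is in the identity $\langle x|T^*Tx\rangle = \|Tx\|^2$ (which must land in $\bR$ rather than merely in $\bH$), and this is automatic from the Hermiticity of $\langle \cdot|\cdot\rangle$ and the definition of $T^*$. Everything else is formally indistinguishable from the complex case treated in Proposition 4.14 and Theorem 4.15 of \cite{M}, and in particular no appeal is needed to any property specific to $\bC$ such as commutativity of the scalars or the complex spectral theorem.
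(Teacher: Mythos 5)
Your proposal is correct and follows essentially the same route as the paper, which gives no separate proof but simply asserts that the standard complex-case argument (Proposition 4.14 and Theorem 4.15 of \cite{M}) carries over verbatim; your write-up is exactly that argument, with the key field-independent ingredients being the purely metric characterisation of compactness, the identity $\||T|x\|=\|Tx\|$, and the polar decomposition $T=U|T|$ available in all three cases. No gaps.
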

The {\em Hilbert decomposition theorem over a Hilbert basis of eigenvectors}, for compact self-adjoint  operators (Theorem 4.20 in \cite{M} for $\bD=\bC$ which is also valid for $\bD=\bR$ as is easy to check) holds true also for $\bD=\bH$  as proved in Theorem 1.2 in \cite{GMP3}. 

We specialize the statement to the case of a selfadjoint compact operator:
\begin{proposition} \label{propdecT} 
	Let $T^*=T\in \gB_\infty(\sH)$ where $\sH$ is a real, complex or quaternionic Hilbert space (not necessarily separable). There is a Hilbert basis  $N$ of $Ker(T)^\perp$ made of eigenvectors of $T$ such that $N$ is finite or countably infinite and
	\begin{equation} 
	Tx = \sum_{u\in N}  u s(u)\langle u|x\rangle  \quad \forall x \in \sH \label{series1}\:,
	\end{equation} 
	$s(u)\in \bR$ is the eigenvalue of $u$ whose multiplicity (the dimension of the associated eigenspace of $T$)  is finite and $\{s(u)\}_{u\in N} = \sigma(T)\setminus\{0\} =  \sigma_p(T)\setminus\{0\}$
where, in case of $\bD= \bH$ the said spectra are intepreted as spherical spectra (see remark below). 
The series (\ref{series1})  can be re-ordered arbitrarily preserving the sum.
\end{proposition}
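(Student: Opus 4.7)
The plan is to adapt the classical variational/iterative proof of the spectral theorem for compact selfadjoint operators, exploiting the fact that the only $\bD$-specific ingredient is the real-valuedness of $\langle x|Tx\rangle$ for $T=T^*$, which holds uniformly for $\bD=\bR,\bC,\bH$ from $\langle x|Tx\rangle=\overline{\langle x|Tx\rangle}$. Everything else rests only on the real structure of $\sH$: inner product, compactness, and orthogonal projection onto closed subspaces all behave identically in the three settings.

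The key lemma to establish first is that a nonzero compact selfadjoint $T$ admits a real eigenvalue $\lambda$ with $|\lambda|=\|T\|$. Setting $m_\pm:=\sup_{\|x\|=1}(\pm\langle x|Tx\rangle)$, one of these equals $\|T\|$ by the selfadjoint norm formula (whose proof uses only polarization and reality of $\langle x|Tx\rangle$); call the chosen extremum $\lambda$ with its sign and pick unit $x_n$ with $\langle x_n|Tx_n\rangle\to\lambda$. The computation
\[
\|Tx_n-x_n\lambda\|^2=\|Tx_n\|^2-2\lambda\langle x_n|Tx_n\rangle+\lambda^2\le 2\lambda^2-2\lambda\langle x_n|Tx_n\rangle\to 0
\]
together with compactness ($Tx_{n_k}\to y$ along a subsequence) yields $x_{n_k}\to u:=y\lambda^{-1}$, a unit eigenvector with $Tu=u\lambda$. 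Reality of $\lambda$ is essential in the quaternionic case, where it lets the eigenvalue equation be written unambiguously under the right-multiplication convention.

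Next I would iterate: set $V_1=\sH$, extract $u_1,s_1$ from the lemma, let $V_2=\{u_1\}^\perp$, which is $T$-invariant by selfadjointness, so that $T|_{V_2}$ is compact selfadjoint; if nonzero, apply the lemma again, and so on. This either terminates (finite-rank $T$, finite $N$) or produces an orthonormal sequence $N=\{u_n\}$ with $|s_n|$ nonincreasing. One has $s_n\to 0$, because otherwise some subsequence has $|s_n|\ge\delta>0$ and then $\|Tu_n-Tu_m\|^2=|s_n|^2+|s_m|^2\ge 2\delta^2$ for $n\ne m$ contradicts precompactness of $\{Tu_n\}$. The same argument bounds each nonzero eigenspace. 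Finally, $N$ is a Hilbert basis of $Ker(T)^\perp$: any $x\in Ker(T)^\perp$ orthogonal to $N$ would yield a nonzero vector in a $T$-invariant subspace on which $T$ is still compact selfadjoint and nonzero, producing a further eigenvector orthogonal to $N$ and contradicting maximality.

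To conclude, expand $x=P_{Ker(T)}x+\sum_{u\in N}u\langle u|x\rangle$ (convergent by Bessel) and apply $T$ to obtain (\ref{series1}). Unconditional convergence follows from $|s(u)|\to 0$: given $\epsilon>0$, the finite set $F_\epsilon=\{u:|s(u)|>\epsilon\}$ satisfies
\[
\Bigl\|Tx-\sum_{u\in F}us(u)\langle u|x\rangle\Bigr\|^2=\sum_{u\notin F}|s(u)|^2|\langle u|x\rangle|^2\le\epsilon^2\|x\|^2
\]
for every finite $F\supset F_\epsilon$. For the spectral identification, each $s(u)$ is trivially an eigenvalue, so $\{s(u)\}_{u\in N}\subset\sigma_p(T)\setminus\{0\}\subset\sigma(T)\setminus\{0\}$; conversely, for $\lambda$ outside $\{0\}\cup\{s(u)\}$ the expansion yields an explicit bounded diagonal inverse of $T-\lambda I$ (in the quaternionic case, of the spherical operator $T^2-2Re(\lambda)T+|\lambda|^2 I$, whose $N$-eigenvalues $s(u)^2-2Re(\lambda)s(u)+|\lambda|^2$ are all positive), giving the reverse inclusion. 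The only subtle point I anticipate is the spherical-spectrum bookkeeping for $\bD=\bH$, but since every $s(u)$ is real, the spherical and ordinary spectra coincide on their nonzero parts and the identification goes through.
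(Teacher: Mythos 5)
Your proof is correct, but note that the paper does not actually prove Proposition \ref{propdecT}: it imports it, citing Theorem 4.20 of \cite{M} for $\bD=\bC$ (with the remark that the real case is analogous) and Theorem 1.2 of \cite{GMP3} for $\bD=\bH$. Your unified variational argument is therefore a genuinely self-contained alternative to the paper's strategy of quoting case-by-case references. In particular, for $\bD=\bH$ the cited literature (and the paper's own style elsewhere, e.g.\ the reduction to $\sH_\bR$ in the proof of Proposition \ref{propTRACE1} and the slice spaces $\sH_{J\imath}$) tends to proceed by reduction to a real or complex Hilbert space, whereas you run the classical Hilbert--Schmidt iteration directly in $\sH$, using only the reality of $\langle x|Tx\rangle$ for $T=T^*$, the norm formula $\|T\|=\sup_{\|x\|=1}|\langle x|Tx\rangle|$, and compactness. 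What your route buys is uniformity over $\bD$ and independence from the slice machinery; what the reduction route buys is that one never has to re-check the quaternionic validity of the auxiliary facts you invoke (the selfadjoint norm formula and the polarization step behind it do hold for $\bD=\bH$, cf.\ \cite{GMP1}, but they are ingredients you are implicitly importing rather than proving). Two places where the write-up should be tightened, though neither is a gap: (1) the completeness of $N$ in $Ker(T)^\perp$ is better phrased not as a ``maximality'' contradiction but by observing that $W:=(\overline{\mathrm{span}\,N})^{\perp}$ is $T$-invariant with $\|T|_W\|\le |s_n|$ for every $n$ (and $\|T|_W\|=0$ in the terminating case), hence $T|_W=0$ and $W=Ker(T)$; (2) in the spherical-resolvent step you need not merely positivity of the diagonal values $s(u)^2-2Re(\lambda)s(u)+|\lambda|^2=|s(u)-\lambda|^2$ (and of $|\lambda|^2$ on $Ker(T)$) but a uniform lower bound, which holds because $\{s(u)\}_{u\in N}$ accumulates only at $0$ and $\lambda\notin\{0\}\cup\{s(u)\}_{u\in N}$.
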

\begin{remark}\label{rempropdecT} {\em Some remarks on the $\bD=\bH$ case are  listed below.
	\begin{enumerate}[(a)]
		\item 
Since quaternions are not commutative,  the standard notion of spectrum must be changed in quaternionic Hilbert spaces.
As  first discussed in \cite{7,8} and later adapted to the Hilbert space theory in  \cite{GMP1} with a finer definition,  the relavant quaternionic notion of spectrum 
does not concern the properties of the operator $T-\lambda I$ when $\lambda$ varies in $\bR$ or $\bC$, but those of the quadratic operator $\Delta_q(T) := T^2-T(q+\overline{q}) + I|q|^2$ when $q$ varies in $\bH$. The {\em spherical spectrum} $\sigma_S(T)$ of $T\in \gB(\sH)$ is made of all quaternions $q\in \bH$ such  that the {\em spherical resolvent} $\Delta_q(T)^{-1}$ does not exist in $\gB(\sH)$. 
It is easy to see that, if
$q\in \sigma_S(T)$, then  $pqp^{-1}\in  \sigma_S(T)$ for every $p\in \sH$ with $|p|=1$.  This property can be described as the action of $SO(3)$ on the non-real part of $q$ interpreted as an element of $\bR^3$ and this is the reason why $\sigma_S(T)$ is called {\em spherical}.
 The generalization to unbounded and not everywhere defined  operators is straightforward \cite{GMP1} and the  finer decomposition into {\em point spherical spectrum}, {\em continuous spherical spectrum} and {\em residual sperical spectrum} is strictly analogous to the corresponding decomposition of the standard spectrum in real and complex Hilbert spaces. As a matter of fact,  it consists of sistematically replacing $(T-\lambda I)$ for $\Delta_q(T)$ in every standard definition in real and comples Hilbert spaces (see \cite{GMP1} for details). In particular, the {\em point spherical spectrum} $\sigma_{pS}(T)$ is made of the quaternions $q$ such that  $\Delta_q(T)$ is not injective. 
Though apparently unrelated with the analogous definition in real and complex Hilbert spaces,  the notion  of spherical spectrum is that exploited in the extensions of all results of spectral theory (e.g., see \cite{GMP1,ACK16,GMP2}). 
		\item In spite of the different definition of (spherical) spectrum of $T$, the elements $q$ of the {\em spherical point spectrum} $\sigma_{Sp}(T)$ turn out to be the {\em eigenvalues} of $T$ (Proposition 4.5 in \cite{GMP1}): there exists $u\in \sH\setminus \{0\}$ such that $Tu=uq$ and $u$ is called {\em eigenvector} of $q$ as usual. As a consequence, the  sperical {\em point} spectrum of an operator in a quaternionic Hilbert space coincides  with the set of eigenvectors of that operator  as it happens for the standard point spectrum in real or complex Hilbert spaces.
		\item Though this is generally false in view of the fact that quaternions does not commute, 
the eigenspaces of a {\em selfadjoint} operator $T$ are $\bH$-subspaces of $\sH$, since the eigenvalues are real  (Theorem 4.8 in \cite{GMP1}) and  quaternionic combinations of eigenvectors of a given (real) eigenvalue are still eigenvectors with the same eigenvalue. 
\item When $\bD=\bH$, the spectrum of the operator $T$ in Proposition \ref{propdecT}  has to be interpreted as its {\em spherical spectrum}  which is however a subset of $\bR$ since $T=T^*$. Furthemore,  possibly up to $0$ which however does not play any role in the considered decomposition of $T$,  the spherical spectrum of $T$ in Proposition \ref{prop1}   coincides with the the set of eigenvalues of $T$, in accordance with the remark (b) above.
	\end{enumerate} }
\end{remark}

The {\em spectral theorem} for (densely defined, closed, and generally unbounded) normal operators and in particular  selfadjoint operators can be formulated in complex Hilbert spaces (see e.g., \cite{R,S,M}) and quaternionic Hilbert spaces \cite{ACK16,GMP2}. For the real case a corresponding theorem is suitable for selfadjoint (generally unbounded) operators (e.g., see \cite{Li,MO1}). 
 \begin{remark}\label{remsom2} {\em 
The spectral machinerey developed in \cite{GMP2} permits one to integrate {\em quaternionic-valued} functions with respect to a projector-valued measure (PVM). The problem pointed out in Remark \ref{remsom} (apparently,  only $\bR$-linear combination of projectors are permitted in quaternionic Hilbert spaces)
  is not an obstruction. A PVM in a quaternionic Hilbert space  is in fact ``intertwining'' \cite{GMP2}, i.e.,    equipped with a so-called {\em left multiplications}  $L: \bH \ni q \mapsto L_q \in \gB(\sH)$ commuting with the PVM itself and giving rise to an operatorial representation of $\bH$. $L$ permits to define {\em quaternionic-linear} combinations of projectors  of the associated PVM as $L_qP_E+ L_{q'}P_F$ (coinciding to $qP_E+ q'P_F$ for $q,q'\in \bR$). These $\bH$-linear combinations of projectors  are  the  building-blocks used to define the spectral integral  of quaternionic-valued functions within the precise formulation of the quaternionic spectral theory established in  \cite{GMP2}. This work uses integrals of real-valued functions only,  so that  the left-multiplication of a PVM does not affect any result (though it exists). However, the full machinery of quantum mechanics exploits the general integration procedure, for instance dealing with continuous representations of Lie groups of symmetries \cite{MO2}.}
\end{remark}

\subsection{Structure of the work}
The next section is devoted to extend the basic theory of trace-class operators to the case of quaternionic (generally non-separable) Hilbert spaces, focussing in particular on the notion of {\em real trace} that naturally arises when requiring basis independence.
The subsequent section proposes a common formulation of Gleason's theorem relying to the correct part of Varadarajan formulation and replacing the notion of trace with that of real trace. The last section studies how our general formulation of Gleason's theorem and the notion of real trace is compatible with the standard formalism  handled by physicists. We find a total agreement.
A final appendix includes the proofs of some technical propositions spread throughout the work.

\section{Notions of trace}
We introduce here a common notion of trace-class operators including real, complex and quaternionic Hilbert spaces and  corresponding notions of traces.
\subsection{Trace class operators}
The notion of trace-class operator we introduce here is the direct extension to $\bR$ and $\bH$ of the standard notion adopted in complex Hilbert spaces (e.g., see Sect.4.4 of \cite{M}).
For the quaternionic case we make use of the result established  in \cite{GMP3} in particular Proposition \ref{propdecT} and the remark below it.
\begin{definition}\label{defdefT} 
	Let $\sH$ be a real, complex or quaternionic Hilbert space. An operator $T\in \gB(\sH)$ is said to be of {\bf trace class} if 
	\begin{equation}\label{defT}
	\sum_{u \in N} \langle u||T|u\rangle< +\infty\:,
	\end{equation}
	for some Hilbert basis $N\subset \sH$.  $\gB_1(\sH)\subset \gB(\sH)$ denotes the set of trace-class operators.
\end{definition}

\noindent Some of the relevant properties of trace-class operators are summarized below.
 We explicitly  omit to discuss the interplay of trace-class operators and {\em Hilbert-Schmidt} ones for the sake of simplicity. Furthermore, no reference to the theory of {\em Schatten class} operators will be mentioned. We address the reader to \cite{CGJ16} for a recent extension of the theory to  quaternionic Hilbert spaces, where a different notion of trace class
operators  and a full definition of  Schatten class of operators   in quaternionic Hilbert spaces are proposed referring to a preferred, arbitrarily fixed, anti-selfadjoint unitary operator. No choice of such an operator is made within our approach. 

\begin{proposition}\label{propTRACE1}  
	Let $\sH$ be a real, complex or quaternionic Hilbert space. The set $\gB_1(\sH)$ of trace-class operators enjoys the following properties.
\begin{enumerate}[(a)]	
	\item If $T \in \gB_1(\sH)$, then (\ref{defT}) is valid for every  Hilbert basis $M\subset \sH$ and  $\sum_{u \in M} \langle u||T|u\rangle$ does not depend on  $M$.
	\item $T \in \gB_1(\sH)$ if and only if both
	\begin{enumerate}[(i)]
		\item $T \in \gB_\infty(\sH)$ and 
		\item the following fact is true
		$$
		||T||_1 := \sum_{s \in \sigma_p(|T|)} s d_s < +\infty\:,
		$$
		where $\sigma_p(|T|) \subset [0, +\infty)$ is the point-spectrum of the selfadjoint compact operator $|T| := \sqrt{T^*T}$ (for $\bD=\bH$, $\sigma_p(|T|)$ is the spherical point-spectrum of $|T|$), and $d_s=1,2,\ldots <+\infty$ is the dimension of the $s$-eigenspace of $|T|$. 
	\end{enumerate}
	\item If $T\in\gB_1(\sH)$, then for every Hilbert basis $M \subset \sH$ it holds that
	$$
	||T||_1 = \sum_{u \in M} \langle u||T|u\rangle
	$$
	\item The set $\gB_1(\sH)$ has the structure of
	\begin{enumerate}[(i)]
		\item  $\bR$-linear subspace of $\gB(\sH)$  and also $\bC$-linear if $\bD=\bC$,
		\item two-sided $^*$-ideal of $\gB(\sH)$,
		\item real Banach space with respect to the norm $||\cdot||_1$.
	\end{enumerate}
	\item The following facts are true for $A \in \gB_1(\sH)$ and $B\in \gB(\sH)$.
	\begin{enumerate}[(i)]
		\item $||AB||_1\leq ||A||_1||B||$  and $||BA||_1\leq ||A||_1||B||$,
		\item $||A||_1= ||A^*||_1$,
		\item $||A|| \leq ||A||_1$.
	\end{enumerate}
\end{enumerate}
The set $\gB_1(\sH)$ is therefore a real (complex if $\bD=\bC)$  Banach algebra which is also a  $^*$-algebra and the norm is $^*$-invariant.
\end{proposition}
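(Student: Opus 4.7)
My plan is to first establish parts (a), (b), (c) as a single spectral package, and then to bootstrap the algebraic and topological statements (d), (e) from them. The guiding idea is that in all three settings $\bR,\bC,\bH$ the operator $|T|:=\sqrt{T^{*}T}$ is selfadjoint and positive and we already have Proposition \ref{propdecT} at our disposal; everything ultimately reduces to running the complex-case argument, but replacing Riesz–Schauder with Proposition \ref{propdecT} and being careful with $\bH$-non-commutativity (scalars inside sums must be real, cf.\ Remark \ref{remsom}).

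The first step is to show that (\ref{defT}) for a single basis $N$ forces $|T|\in\gB_\infty(\sH)$, hence $T\in\gB_\infty(\sH)$ by Proposition \ref{prop1}. For finite $F\subset N$, let $P_F$ be the orthogonal projector onto $\mathrm{span}(F)$ and set $S_F:=\sqrt{|T|}\,P_F$, a finite-rank operator. Expanding $x=\sum_{u\in N} u\langle u|x\rangle$, applying Bessel's inequality, and using $\|\sqrt{|T|}u\|^{2}=\langle u||T|u\rangle$ yields
\begin{equation*}
\|\sqrt{|T|}-S_F\|^{2}\;\leq\;\sum_{u\in N\setminus F}\langle u||T|u\rangle\;\longrightarrow\;0,
\end{equation*}
so $\sqrt{|T|}$, and therefore $|T|$, is compact. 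Proposition \ref{propdecT} then produces an orthonormal eigenbasis $\{v\}$ of $\mathrm{Ker}(|T|)^{\perp}$ with strictly positive real eigenvalues $s(v)$ of finite multiplicity. For an \emph{arbitrary} basis $M$ all summands below are nonnegative reals, so Parseval ($\sum_{u\in M}|\langle v|u\rangle|^{2}=1$) and Tonelli give
\begin{equation*}
\sum_{u\in M}\langle u||T|u\rangle\;=\;\sum_{u\in M}\sum_{v}s(v)|\langle v|u\rangle|^{2}\;=\;\sum_{v} s(v)\;=\;\sum_{s\in\sigma_{p}(|T|)} s\,d_{s},
\end{equation*}
which is independent of $M$ and equals $\|T\|_1$. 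This proves (a), (c), and the ``only if'' in (b); the ``if'' is obtained by running the same computation in reverse using any basis containing an eigenbasis of $|T|$.

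For (d)–(e) I would use the polar decomposition $A=U|A|$ with partial isometry $U$ (available in all three settings \cite{GMP1,MO1,M}). The estimate $\|A\|\leq\|A\|_{1}$ in (e)(iii) follows from the spectral formula $\|A\|=\||A|\|=\max_{v}s(v)\leq\sum_{v}s(v)=\|A\|_{1}$. For $AB$ and $BA$ in (e)(i), Proposition \ref{prop1} gives compactness, and writing $BA=V|BA|$ produces $|BA|=V^{*}BU|A|=(V^{*}BU\sqrt{|A|})\sqrt{|A|}$; Cauchy–Schwarz on the basis sum
\begin{equation*}
\sum_{u\in M}\langle u||BA|u\rangle\;=\;\sum_{u}\bigl\langle\sqrt{|A|}\,U^{*}B^{*}V u\,\big|\,\sqrt{|A|}\,u\bigr\rangle
\end{equation*}
together with $\sum_{u}\|\sqrt{|A|}u\|^{2}=\|A\|_{1}$ and $\|V^{*}BU\|\leq\|B\|$ gives $\|BA\|_{1}\leq\|B\|\,\|A\|_{1}$; the other bound and (e)(ii) $\|A^{*}\|_{1}=\|A\|_{1}$ follow symmetrically from $|A^{*}|=U|A|U^{*}$ modulo the standard polar-decomposition identities. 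Linearity in (d)(i) is reduced to the triangle inequality $\|A+B\|_{1}\leq\|A\|_{1}+\|B\|_{1}$: writing $A+B=W|A+B|$ and $|A+B|=W^{*}A+W^{*}B$, the scalars now being real (or complex when $\bD=\bC$) one distributes the basis sum term-by-term and applies the bound from (e)(i) to $W^{*}A$ and $W^{*}B$. The ideal property (d)(ii) is then immediate, and for completeness (d)(iii) the usual argument works: a $\|\cdot\|_{1}$-Cauchy sequence is also $\|\cdot\|$-Cauchy by (e)(iii), so it converges in operator norm to some $T\in\gB_{\infty}(\sH)$, and a Fatou-type argument applied to the basis sum $\sum_{u}\langle u||T_{n}-T|u\rangle$ upgrades this to $\|\cdot\|_{1}$-convergence.

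The main obstacle I anticipate is not any single step but the consistent handling of the quaternionic polar decomposition in the identities for $|BA|$, $|A+B|$, and $|A^{*}|$: in $\bH$ one must avoid inserting non-real scalars between $U^{*}$ and $B$, and verify that the partial isometries produced by polar decomposition intertwine appropriately with $|A|$, since $U$ is only a partial isometry (not unitary) and ordinary formulas like $|A^{*}|=U|A|U^{*}$ require $U^{*}U$ to coincide with the support projection of $|A|$. Once these ingredients are in place, the real/complex cases are absorbed automatically because the arguments only use the real structure of $\gB(\sH)$.
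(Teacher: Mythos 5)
Your proof is essentially correct, but it follows a genuinely different route from the paper's. The paper handles the quaternionic case by a \emph{reduction to the real case}: it realifies $\sH$ into a real Hilbert space $\sH_\bR$ with scalar product $(x|y)=Re\langle x|y\rangle$, shows that a quaternionic Hilbert basis $N$ yields the real basis $N_\bR=\{u,ui,uj,uk\,|\,u\in N\}$ and that adjoints, positivity, square roots, absolute values and compactness are preserved, and then transports all of (a)--(e) from the (cited) real/complex theory through the identity $\sum_{u\in N}\langle u||T|u\rangle=\frac14\sum_{v\in N_\bR}(v||T|v)$. You instead give a direct, uniform proof in all three cases: finite-rank approximation of $\sqrt{|T|}$ to get compactness, the eigenbasis of Proposition \ref{propdecT} plus Parseval and Tonelli for basis-independence and the formula $\|T\|_1=\sum_s s\,d_s$, and polar-decomposition manipulations for (d)--(e). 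Your approach is self-contained and makes the mechanism of basis-independence (positivity of the summands plus a Fubini--Tonelli swap) transparent --- indeed the paper itself uses exactly this swap later, in the proof of Proposition \ref{propTRACE}(c) --- but it obliges you to verify the quaternionic polar-decomposition identities ($|BA|=V^*BU|A|$, $|A^*|=U|A|U^*$, $|A+B|=W^*A+W^*B$), which you correctly flag as the delicate point; the paper's realification sidesteps all of this and additionally produces a lemma it reuses elsewhere (e.g.\ for the cyclicity of the real trace). One step you should spell out: in (e)(i), Cauchy--Schwarz leaves you with the factor $\bigl(\sum_u\|\sqrt{|A|}\,U^*B^*Vu\|^2\bigr)^{1/2}$, and bounding it by $\|B\|\,\|A\|_1^{1/2}$ is not a direct consequence of $\|V^*BU\|\le\|B\|$; you need the adjoint-invariance of the Hilbert--Schmidt-type sum $\sum_u\|Cu\|^2=\sum_v\|C^*v\|^2$ (itself a Tonelli swap, valid over $\bH$) so as to move the bounded factor to the side where the operator-norm estimate applies. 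This is routine, but since you deliberately avoid the paper's reduction you cannot borrow it from the real case and must record it explicitly.
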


\begin{proof} See Appendix \ref{Appendix}. \end{proof}

\subsection{Real trace}
 We can now pass to discuss the notion of trace stating and proving some  properties relevant for our final goal. We also include some of the results established in  \cite{T} (here extended also to the non-separable case).
\begin{proposition}\label{propTRACE}  
	Let $\sH$ be a  Hilbert space over $\bD= \bR$, $\bC$, or $\bH$ not necessarily separable. If   $N \subset \sH$ is a Hilbert basis and $T\in \gB_1(\sH)$, then the {\bf $N$-trace} of $T$, 
	\begin{equation}\label{eqtr}
		tr_N(T) := \sum_{x\in N} \langle x|T x\rangle
	\end{equation}
	is a well defined element of $\bD$ satisfying 
	\begin{equation}\label{ineqtr}
		 |tr_N(T)| \leq ||T||_1\:.
	\end{equation}
	The right-hand side of (\ref{eqtr}) consists of a sum over a set at most   countable of non-vanishing elements. That series absolutely converges and, for the given $N$, can be re-ordered arbitrarily preserving the sum. 

	The further facts are true.
	\begin{enumerate}[(a)]
		\item If  $a,b \in \bR$ (or $\bC$ if $\bD=\bC$) and $A,B \in \gB_1(\sH)$,
		\begin{enumerate}[(i)]
			\item $tr_N(A^*)= \overline{tr_N(A)}$,
			\item $tr_N(aA+bB) = a tr_N(A) + b tr_N(B)$,
			\item $tr_N(A)\ge 0$ whenever $A\ge 0$.
		\end{enumerate}
		\item If $\bD=\bR$ or $\bC$ and $A\in \gB_1(\sH)$, $B \in \gB(\sH)$ then
		\begin{enumerate}[(i)]
			\item $tr_N(A)= tr_{N'}(A)$,
			for every pair of Hilbert basis $N,N'\subset \sH$,
			\item $tr_N(AB) = tr_N(BA)$.
		\end{enumerate}
		\item If $\bD=\bH$ and $A\in  \gB_1(\sH)$, then the following facts are equivalent.
		\begin{enumerate}[(i)]
			\item $tr_N(A)= tr_{N'}(A)$ for every pair of Hilbert basis $N,N'\subset \sH$.
			\item $A=A^*$.
		\end{enumerate}
			\item If $A\in  \gB_1(\sH)$, $B \in \gB(\sH)$, the followig facts hold,
		\begin{enumerate}[(i)]
			\item $Re(tr_N(A))= Re(tr_{N'}(A))$ for every pair of Hilbert basis $N,N'\subset \sH$,
			\item $Re(tr_N(AB)) = Re(tr_N(BA))$.
		\end{enumerate}
		\item If $A=A^*\in\gB_1(\sH)$ and $N_A$ is a Hilbert basis of $\sH$ obtained by  adding a Hilbert basis of $Ker(A)$ to a Hilbert basis of $Ker(A)^\perp$ made of eigenvectors of $A$ as in Proposition \ref{propdecT}, then
		\begin{equation}
		tr_{N_A}(BA)=tr_{N_A}(AB)
		\end{equation}
		for any $B\in\gB_1(\sH)$. Furthermore, if $B=B^*$, then $tr_N(AB)\in\bR$.
		\item If $A,B\in\gB_1(\sH)$ satisfy $A\ge B$, then 
		$$
		Re(tr_N(A))\ge Re(tr_N(B))
		$$
		for every Hilbert basis $N\subset \sH$.
	\end{enumerate}
\end{proposition}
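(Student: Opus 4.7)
First I would establish the basic bound. The polar decomposition $T = U|T|$ and the splitting $|T| = |T|^{1/2}|T|^{1/2}$ yield $\langle x|Tx\rangle = \langle |T|^{1/2}U^*x \,|\, |T|^{1/2}x\rangle$; applying Cauchy--Schwarz termwise and then for series gives
$$
\sum_{x \in N} |\langle x|Tx\rangle| \le \sqrt{\sum_x \langle x|U|T|U^*x\rangle}\,\sqrt{\sum_x\langle x||T|x\rangle},
$$
with both factors bounded by $\|T\|_1$ (the first via Proposition \ref{propTRACE1}(c) and (e)(i) applied to the positive trace-class $U|T|U^*$, the second by definition). This supplies absolute convergence, at-most-countable support, permissibility of arbitrary reordering, and the bound $(\ref{ineqtr})$ in one stroke. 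Part (a) follows by inspection of each term: $\langle x|A^*x\rangle = \overline{\langle x|Ax\rangle}$ for (i), $\bD$-linearity of the inner product in the right entry for (ii), pointwise positivity of each summand for (iii).

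For part (b) I would first prove cyclicity $tr_N(AB) = tr_N(BA)$ by double-Parseval expansion, inserting the compact-operator decomposition of $A\in \gB_1(\sH)$ from Proposition \ref{propdecT} (applied to $|A|$) to secure absolute convergence of the double sum; the commutativity of scalars in $\bR$ or $\bC$ then lets the double sum recollect as $tr_N(BA)$. Basis independence follows from cyclicity via $tr_{N'}(A) = tr_N(U^*AU) = tr_N(AU^*U) = tr_N(A)$ for the unitary $U$ mapping $N$ onto $N'$.

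The genuine obstacle is the $(\Rightarrow)$ direction of part (c). Decompose $A = A_s + A_a$ with $A_s = (A + A^*)/2$ and $A_a = (A - A^*)/2$. The $(\Leftarrow)$ direction applied to $A_s$ follows by expanding $A_s$ on its eigenbasis from Proposition \ref{propdecT} and using Parseval to collapse the resulting double sum to $\sum_k s_k$; this shows $tr_N(A_s)$ is already basis-independent, so basis-independence of $tr_N(A)$ is equivalent to basis-independence of $tr_N(A_a)$. To conclude $A_a = 0$, fix a unit $x \in N$ and a unit quaternion $q$: then $N_q := (N \setminus \{x\}) \cup \{xq\}$ is again a Hilbert basis, and
$$
tr_{N_q}(A_a) - tr_N(A_a) = \overline{q}\,\langle x|A_a x\rangle\, q - \langle x|A_a x\rangle.
$$
Vanishing for every unit $q \in \bH$ forces $\langle x|A_a x\rangle$ into the center $\bR$ of $\bH$; anti-selfadjointness then makes its real part zero, so $\langle x|A_a x\rangle = 0$ for every unit $x$. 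Polarization combined with the anti-selfadjoint identity $\langle y|A_a x\rangle = -\overline{\langle x|A_a y\rangle}$ shows $\langle x|A_a y\rangle \in \bR$ for all $x,y$; the substitution $y \to yi$ forces $\langle x|A_a y\rangle \cdot i \in \bR$ too, compatible with $\bR$-valuedness only if $\langle x|A_a y\rangle = 0$, whence $A_a = 0$ by nondegeneracy of the scalar product.

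Parts (d)--(f) then follow from the accumulated tools. For (d)(i), $\langle x|A_a x\rangle$ has zero real part pointwise, so $Re(tr_N(A)) = tr_N(A_s)$, which is basis-independent by (c). For (d)(ii), expand $A = \sum_k u_k s_k \langle v_k|\cdot\rangle$ from Proposition \ref{propdecT}; Parseval together with the identity $Re(qq') = Re(q'q)$ identifies both $Re(tr_N(AB))$ and $Re(tr_N(BA))$ with $\sum_k s_k\, Re(\langle v_k|Bu_k\rangle)$, overcoming the non-commutativity of the scalar factors. For (e), on the eigenbasis $N_A$ one has $Ax = xs_x$ with $s_x \in \bR$, so $\langle x|ABx\rangle = s_x\langle x|Bx\rangle = \langle x|Bx\rangle s_x = \langle x|BAx\rangle$ termwise; when $B = B^*$ the real factor $\langle x|Bx\rangle \in \bR$ makes every summand real, yielding $tr_{N_A}(AB) \in \bR$. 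Part (f) is immediate: $A - B \ge 0$ gives $\langle x|(A - B)x\rangle \ge 0$ in $\bR$ pointwise, so $tr_N(A - B) \in [0, +\infty)$ and the linearity of (a)(ii) transfers the inequality to real parts.
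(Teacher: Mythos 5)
Your proof is correct, and for the preliminary estimate, (a), (c), (e) and (f) it runs essentially along the paper's own lines (same polar-decomposition/Cauchy--Schwarz bound, same basis-perturbation $x\mapsto xq$ trick for (c)($\Rightarrow$), same termwise computation on the eigenbasis for (e)). The genuine divergence is in (d)(ii) and, to a lesser extent, (b). For (b) the paper simply cites the standard complex/real references, whereas you rederive cyclicity by a double-Parseval expansion and then deduce basis independence from it via $tr_{N'}(A)=tr_N(U^*AU)$; this is self-contained but needs the same Fubini--Tonelli care the paper spells out in its proof of (c)($\Leftarrow$). For (d)(ii) the paper passes to the realified Hilbert space $\sH_\bR$ with scalar product $(x|y)=Re\langle x|y\rangle$, shows $tr_{N_\bR}(AB)=4\,Re(tr_N(AB))$, and imports cyclicity from the real case; you instead insert the singular-value decomposition $A=\sum_k u_k s_k\langle v_k|\cdot\rangle$ and use the two-factor identity $Re(qq')=Re(q'q)$ plus Parseval to identify both $Re(tr_N(AB))$ and $Re(tr_N(BA))$ with $\sum_k s_k\,Re\langle v_k|Bu_k\rangle$. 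Your route avoids the realification machinery of Lemma \ref{lemma} at the cost of having to justify the interchange of the double sum and of $Re$ with the series, which does go through since $\sum_{x,k}|\langle x|u_k\rangle|\,s_k\,|\langle v_k|Bx\rangle|\leq \|B\|\sum_k s_k<+\infty$; the paper's realification, by contrast, is reused elsewhere (notably in the proof of Proposition \ref{propTRACE1}), so it earns its keep globally even if it is heavier here. One cosmetic slip: in your chain for (b) the middle term should read $tr_N(AUU^{*})$ rather than $tr_N(AU^{*}U)$, though both equal $tr_N(A)$.
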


\begin{remark}
{\em Notice that, although $tr$ does not satisfy the cyclic property  $tr(AB)=tr(BA)$ in quaternionic Hilbert spaces, its real part does as stated in (d)(ii).}
\end{remark}
\begin{proof}
See Appendix \ref{Appendix}.
\end{proof}

Proposition \ref{propTRACE} leads to the following definition.
\begin{definition}\label{DEFTRV} 
	Let $\sH$ be a real, complex or quaternionic Hilbert space and $A \in \gB_1(\sH)$.
	\begin{enumerate}[(a)]
		\item The {\bf real trace} of $A$ is defined as
		$$
		tr^{\bR}(A) := Re(tr_N(A))
		$$ 
		where $N\subset \sH$ is a Hilbert basis.
		\item If $tr_N(A)$ does not depend on the Hilbert basis $N \subset \sH$ we call it {\bf trace} of $A$ and denote it by $tr(A)$. 
	\end{enumerate}	
\end{definition}

\begin{remark}
{\em It should be clear that, according to the given definitions, noticing that $|(|T|)|=|T|$, we have 
\begin{equation}  
	||T||_1 = ||\:|T|\:||_1 = tr(|T|)\label{traceagg}
\end{equation} 
for every $T \in \gB_1(\sH)$ irrespective of the nature of $\bD$. This follows immediately from (c) of Proposition \ref{propTRACE1} and  selfadjointness of $|T|$.}
\end{remark}

With the said definition, an immediate corollary of Proposition \ref{propTRACE} follows. 
\begin{corollary}\label{corollary}
	Let $\sH$ be a real, complex or quaternionic Hilbert space. Then the following statements hold.
	\begin{enumerate}[(a)]
		\item $tr^{\bR} : \gB_1(\sH) \to \bR$ is  $\bR$-linear and $^*$-invariant.
		\item $tr^\bR(A)\ge 0$ for  $A\in\gB_1(\sH)$ positive.
		\item $tr^\bR(A)\ge tr^\bR(B)$ for $A,B\in\gB_1(\sH)$ satisfying $A\ge B$.
		\item $tr^{\bR}(AB)=tr^{\bR}(BA)$ for $A\in \gB_1(\sH)$ and $B\in \gB(\sH)$.
		\item  $tr^{\bR}(A)=tr(A)$ for $A^*=A\in \gB_1(\sH)$.
		\item If $A^*=A\in \gB_1(\sH)$ and $B^*=B\in \gB(\sH)$ then
		\begin{enumerate}[(i)]
			\item if $\bD=\bR, \bC$ then $tr^{\bR}(AB)=tr(AB)$ for $\bD=\bR,\bC$,
			\item if $\bD=\bH$, then $tr^\bR(AB)=tr_{N_A}(AB)$ where $N_A$ is the completion of a Hilbert basis of $Ker(A)^\perp$ made of eigenvectors of $A$.
		\end{enumerate}
		\item If $A^*=A\in \gB_1(\sH)$ and $P\in\cL(\sH)$, then $	tr^\bR(PA)=tr^\bR(PAP)=tr(PAP)$.
	\end{enumerate}
\end{corollary}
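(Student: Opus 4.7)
The plan is to derive each assertion directly from Proposition~\ref{propTRACE} by applying $Re(\cdot)$ and invoking the real-part basis-independence established there. Almost every part then becomes mechanical; the single nontrivial computation is in (f)(ii).

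First I would handle (a)--(c) in one stroke. For (a), $\bR$-linearity of $tr^\bR$ follows by applying $Re(\cdot)$ to Proposition~\ref{propTRACE}(a)(ii) with $a,b\in\bR$, and $*$-invariance follows from Proposition~\ref{propTRACE}(a)(i), since $Re(\overline{z})=Re(z)$ in $\bD$. For (b), when $A\ge 0$ each summand $\langle u|Au\rangle$ is a nonnegative real, so $tr_N(A)\in[0,+\infty)$ and equals its own real part. Part (c) is then immediate from (a) and (b) applied to $A-B\ge 0$.

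Next I would dispatch (d) by taking real parts in Proposition~\ref{propTRACE}(d)(ii), and (e) by combining Proposition~\ref{propTRACE}(a)(i), which gives $tr_N(A)=\overline{tr_N(A)}\in\bR$ when $A=A^*$, with basis-independence from Proposition~\ref{propTRACE}(b)(i) in the real/complex case or Proposition~\ref{propTRACE}(c) in the quaternionic case; then $tr(A)$ exists, is real, and coincides with $tr^\bR(A)$. For (f)(i) in the real/complex setting, Proposition~\ref{propTRACE}(b) supplies both basis-independence and the full cyclic property, hence $\overline{tr(AB)}=tr((AB)^*)=tr(BA)=tr(AB)$ forces $tr(AB)\in\bR$, and this coincides with $tr^\bR(AB)$.

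The hardest step will be (f)(ii) for $\bD=\bH$, where cyclicity of $tr_N$ fails in general. My plan is to use Proposition~\ref{propTRACE}(d)(i) to replace an arbitrary basis by $N_A$, reducing the claim to showing that $tr_{N_A}(AB)$ is already real. Writing $N_A=N_0\cup N_{\neq 0}$ with $N_{\neq 0}$ the eigenbasis of $Ker(A)^\perp$ provided by Proposition~\ref{propdecT} with real eigenvalues $s(u)$, the summands over $u\in N_0$ vanish because $Au=0$, and for $u\in N_{\neq 0}$ centrality of the real scalar $s(u)$ yields $\langle u|ABu\rangle = \langle Au|Bu\rangle = s(u)\langle u|Bu\rangle$ with $\langle u|Bu\rangle\in\bR$ by $B=B^*$. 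Summing shows $tr_{N_A}(AB)\in\bR$, as required.

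Lastly, for (g), I would write $PA=P^2A=P(PA)$ and apply (d) to obtain $tr^\bR(PA)=tr^\bR((PA)P)=tr^\bR(PAP)$; since $\gB_1(\sH)$ is a two-sided ideal (Proposition~\ref{propTRACE1}(d)(ii)), $PAP\in\gB_1(\sH)$, and it is manifestly selfadjoint, so (e) gives $tr^\bR(PAP)=tr(PAP)$. Beyond the short eigenbasis calculation in (f)(ii), every step is a direct consequence of Proposition~\ref{propTRACE} or of preceding parts of the corollary.
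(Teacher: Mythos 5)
Your proposal is correct and follows essentially the same route as the paper, which simply notes that (a)--(f) follow from Proposition \ref{propTRACE} and proves (g) via $tr^\bR(PA)=tr^\bR(PPA)=tr^\bR(PAP)$ together with selfadjointness of $PAP$. Your explicit eigenbasis computation in (f)(ii) is a sensible touch, since Proposition \ref{propTRACE}(e) is stated for $B\in\gB_1(\sH)$ while the corollary needs $B\in\gB(\sH)$, and your argument (mirroring the appendix proof of that item) covers this case directly.
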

\begin{proof}
	The proofs of (a)-(f) follow easily from Proposition \ref{propTRACE}. To prove (g) just notice that $PAP$ is selfadjoint and that $tr^\bR(PA)=tr^\bR(PPA)=tr^\bR(PAP)$, which follows from $P=PP$.
\end{proof}
To conclude this list of properties of trace-class operators, we stress that, for complex Hilbert spaces, the following well known fact holds (e.g. see Proposition 4.41 in \cite{M}) which we extend here to the quaternionic case.  
\begin{proposition}\label{proptraceeq}
	If $\sH$ is a complex or quaternionic Hilbert space, the following asserts are equivalent for $A\in \gB(\sH)$.
	\begin{enumerate}[(i)]
		\item $A \in \gB_1(\sH)$,
		\item $\sum_{x\in N} |\langle x|A x\rangle| <+\infty$ for every Hilbert basis of $\sH$.
	\end{enumerate}
	If $\sH$ is real, then (i) implies (ii), but the converse implication is generally false in infinite dimensional spaces.
\end{proposition}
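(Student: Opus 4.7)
My plan is to treat the easy implication (i)$\Rightarrow$(ii), which holds in all three signatures, uniformly, and then handle the converse separately for $\bD=\bC$ and $\bD=\bH$, localizing the obstruction that causes the real case to fail.

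For (i)$\Rightarrow$(ii) I would start from the polar decomposition $A=U|A|$ and factor $|A|=|A|^{1/2}|A|^{1/2}$. For any Hilbert basis $N$, Cauchy--Schwarz gives
$$|\langle x|Ax\rangle|=|\langle |A|^{1/2}U^*x\,|\,|A|^{1/2}x\rangle|\le \||A|^{1/2}U^*x\|\,\||A|^{1/2}x\|.$$
Summing and applying Cauchy--Schwarz in $\ell^2$ reduces the estimate to the product $\left(\sum_{x\in N}\langle x||A|x\rangle\right)^{1/2}\left(\sum_{x\in N}\langle x|U|A|U^*x\rangle\right)^{1/2}$. The first factor is $\|A\|_1^{1/2}$ by Proposition \ref{propTRACE1}(c). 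The second factor is the basis-independent trace of the positive operator $U|A|U^*\in\gB_1(\sH)$, which by the ideal estimate of Proposition \ref{propTRACE1}(e)(i) has trace $\le\|A\|_1$. Hence $\sum_{x\in N}|\langle x|Ax\rangle|\le \|A\|_1<+\infty$ for every Hilbert basis, so (ii) follows.

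For the converse when $\bD=\bC$, the plan is to decompose $A=B+Ci$ into self-adjoint parts $B:=(A+A^*)/2$ and $C:=(A-A^*)/(2i)$. Since $\langle x|Bx\rangle,\langle x|Cx\rangle\in\bR$, one has $|\langle x|Ax\rangle|^2=\langle x|Bx\rangle^2+\langle x|Cx\rangle^2$, so (ii) descends to $B$ and $C$ individually and it suffices to treat self-adjoint operators. For self-adjoint $B$, I use the spectral projector $P:=E_B([0,+\infty))$ to split $B=B_+-B_-$ with $B_+:=BP\ge 0$ and $B_-:=-B(I-P)\ge 0$ of orthogonal supports. Choosing a Hilbert basis $N=N_+\sqcup N_-$ adapted to $P$, one has $|\langle x|Bx\rangle|=\langle x|B_+x\rangle$ on $N_+$ and $|\langle x|Bx\rangle|=\langle x|B_-x\rangle$ on $N_-$, while $B_+$ vanishes on $(I-P)\sH\supset N_-$ and $B_-$ on $P\sH\supset N_+$. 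Hypothesis (ii) then immediately yields $\sum_{x\in N}\langle x|B_\pm x\rangle<+\infty$, so $B_\pm\in\gB_1(\sH)$ by Definition \ref{defdefT}, and $B,C,A$ all lie in $\gB_1(\sH)$ by linearity (Proposition \ref{propTRACE1}(d)).

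For $\bD=\bH$ I would use the Hermitian/anti-Hermitian splitting $A=S+T$ with $S=(A+A^*)/2$ and $T=(A-A^*)/2$. Since $\langle x|Sx\rangle\in\bR$ and $\mathrm{Re}\,\langle x|Tx\rangle=0$, the quaternionic Pythagoras identity $|\alpha+\beta|^2=\alpha^2+|\beta|^2$ for $\alpha\in\bR$ and $\mathrm{Re}(\beta)=0$ gives $|\langle x|Ax\rangle|^2=\langle x|Sx\rangle^2+|\langle x|Tx\rangle|^2$, so (ii) passes to $S$ and $T$ separately. The self-adjoint part $S$ is treated verbatim as in the complex setting, using the quaternionic spectral theorem of \cite{GMP1,GMP2}. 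The \emph{main obstacle} is the anti-self-adjoint part: there is no canonical complex structure on $\sH$ that converts $T$ into a self-adjoint operator, as one did in the complex case by scalar multiplication by $i$. My plan is to exploit the spectral theorem for normal operators in \cite{GMP2}, which provides the PVM of $T$ together with an intertwining left multiplication $L:\bH\ni q\mapsto L_q\in\gB(\sH)$ commuting with $T$ (cf. Remark \ref{remsom2}). Then $L_iT=TL_i$ is self-adjoint with $(L_iT)^2=|T|^2$ so $|L_iT|=|T|$. Once (ii) for $L_iT$ is deduced from (ii) for $T$ -- the delicate technical step, which I would carry out by a polarization argument exploiting that $L_i$ is a unitary permuting Hilbert bases so as to control the off-diagonal entries $\langle L_ix|Tx\rangle=-\langle x|L_iTx\rangle$ in terms of diagonal sums over a basis adapted to the $L_i$-invariant splitting of $\sH$ -- the self-adjoint case produces $L_iT\in\gB_1(\sH)$, and the ideal property (Proposition \ref{propTRACE1}(d)(ii)) finally yields $T=-L_i(L_iT)\in\gB_1(\sH)$. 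The failure of (ii)$\Rightarrow$(i) over $\bR$ is witnessed by the counterexample already announced in point (C) of the Introduction: any bounded $A$ with $A^*=-A$ and $A^2=-I$ on an infinite-dimensional real Hilbert space has $\langle x|Ax\rangle=0$ identically by antisymmetry, so (ii) is vacuously satisfied while $|A|=I$ is patently not trace-class.
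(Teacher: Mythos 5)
Your overall architecture is sound and largely parallels the paper's: the implication (i)$\Rightarrow$(ii) via $A=U|A|^{1/2}\cdot|A|^{1/2}$ and Cauchy--Schwarz is exactly the paper's argument (established there as the opening estimate in the proof of Proposition \ref{propTRACE}), the real counterexample with $A^*=-A$, $A^2=-I$ is the one the paper uses, and the quaternionic reduction to the Hermitian part $S$ and anti-Hermitian part $T$ is the paper's $B=(A+A^*)/2$, $C=(A-A^*)/2$ splitting (your Pythagoras identity $|\langle x|Ax\rangle|^2=\langle x|Sx\rangle^2+|\langle x|Tx\rangle|^2$ is a slightly sharper way to pass (ii) to the two parts than the paper's bound $|\langle x|(A\pm A^*)x\rangle|\le 2|\langle x|Ax\rangle|$; both work). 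You also spell out the complex self-adjoint case via $B=B_+-B_-$ and a basis adapted to $E_B([0,+\infty))$, where the paper simply cites Proposition 4.41 of \cite{M}; that argument is complete and correct.

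The genuine gap is precisely where you flag ``the delicate technical step'': deducing (ii) for $L_\imath T$ from (ii) for $T$. As written this is a plan, not a proof, and the tool you propose --- polarization to control the off-diagonal quantities $\langle L_\imath x|Tx\rangle$ --- is the wrong one: in a quaternionic Hilbert space polarization expresses $\langle u|Tv\rangle$ through several diagonal terms $\langle u\pm vq|T(u\pm vq)\rangle$ whose absolute values you would then need to sum over a basis, and the vectors $x\pm L_\imath x$ do not in general reassemble into a Hilbert basis, so hypothesis (ii) gives you no control over that sum. The step closes, but by a different and much shorter observation, which is exactly what the paper does: take $J$ from the improved polar decomposition $T=J|T|$ of Theorem 5.9 in \cite{GMP1} (your $L_\imath$), so that $J$ commutes with $|T|$, and choose the Hilbert basis $N$ \emph{inside the slice} $\sH_{J\imath}$. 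Then for $u\in N$ one has $J|T|u=|T|Ju=(|T|u)\imath$, hence $\langle u|Tu\rangle=\langle u||T|u\rangle\,\imath$ and $|\langle u|Tu\rangle|=\langle u||T|u\rangle$, so hypothesis (ii) for this single basis already reads $\sum_{u\in N}\langle u||T|u\rangle<+\infty$, which is Definition \ref{defdefT} verbatim. In particular the whole detour through the self-adjoint operator $L_\imath T$, the spectral splitting of $L_\imath T$, and the ideal property at the end is unnecessary for the anti-Hermitian part; it is only needed (and works as you describe) for $S$. Until the passage from (ii) for $T$ to (ii) for $L_\imath T$ is actually carried out --- or replaced by the slice-basis computation above --- the quaternionic case, which is the only substantive new content of this proposition, remains unproved in your write-up.
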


\begin{proof} 
	See Appendix \ref{Appendix}.
\end{proof}
 In \cite{V2} (ii) was used as definition of trace-class operator. We point out that this definition is equivalent to ours in the complex and quaternionic case only.

\subsection{A basis-independence property of $tr(A)$  for $\bD=\bH$.}
Contrarily to the real and complex case, in quaternionic Hilbert spaces $tr_N(A)$ may depend on the chosen Hilbert basis $N$ for a fixed $A \in \gB_1(\sH)$. However, an invariance property remains and, once more, it can be stated in terms of {\em real trace}. Although  this subject is not in the main stream of this work we spend some words about it. The reader interested in the application of previous results to Gleason's theorem can safely skip this section.

We remind  the reader that (Lemma 3.9 in \cite{GMP1}), if $J$ is an anti selfadjoint unitary operator in a quaternionic Hilbert space and we take $\imath \in \bH$ with $|\imath|=1$, then the complex vector space $\sH_{J\imath}:= \{z\in \sH \:|\: Jz=z\imath\}$ -- {\em using $\imath$ as imaginary unit} -- inherits the structure of complex Hilbert space when restricting the scalar product of $\sH$ to $\sH_{J\imath}$. Furthermore, if $N\subset \sH_{J\imath}$ is a Hilbert basis of $\sH_{J\imath}$, it is also a Hilbert basis of $\sH$ (Proposition 3.8  in \cite{GMP1}). A Hilbert basis $N$ of $\sH$ is evidently also a Hilbert basis of 
$\sH_{J\imath}$ whenever $N\subset  \sH_{J\imath}$, considering only  $\imath$-complex combinations of elements of $N$. If $N$ is a Hilbert basis of $\sH_{J\imath}$ and we
change $\imath$ to $\imath'$, 
then $N'= \{us \:|\: u\in N\}$ is a Hilbert basis of $\sH_{J\imath'}$ if $s^{-1}\imath s= \imath'$ and $|s|=1$.

Consider $A\in \gB(\sH)$ for a quaternionic Hilbert space $\sH$. The anti selfadjoint part $A-A^*$ is normal so, according to Theorem 5.9 in \cite{GMP1} (applied to $T=A-A^*$), its polar decomposition can be improved as follows. There exists $J\in \gB(\sH)$ with 
\begin{enumerate}[(1)]
	\item $J=-J^*$, $JJ= -I$,
	\item $A-A^*=J|A-A^*|$,
	\item $J$ commutes with both $A-A^*$ and $|A-A^*|$ (not necessarily with $A$).
\end{enumerate}

The polar decomposition theorem implies that $J$ is  {\em uniquely defined} on $Ker(|A-A^*|)^\perp = Ker(A-A^*)^\perp$\footnote{By definition of $|T|$, $||Tx||^2 = \langle x|T^*Tx\rangle =  \langle x| |T|^2x\rangle = |||T|x||^2$, so that $Ker(T)=Ker(|T|)$ if $T\in \gB(\sH)$.}.

We can now state and prove a proposition regarding a basis-invariance property in the quaternionic case.

\begin{proposition}\label{lastPROP} 
	Let $\sH$ be a quaternionic Hilbert space and $A\in \gB_1(\sH)$, let  $J$ satisfy (1)-(2) above with respect to $A-A^*$ and take $\imath \in \bH$ with $|\imath|=1$. For every Hilbert basis $N$ of $\sH$ such that $N \subset \sH_{J\imath}$ the following identity holds
	\begin{equation}\label{ivTRN}
		tr_N(A) = tr^\bR(A) + \frac{\imath}{2}tr(|A-A^*|)\:.
	\end{equation} 
	If $N'\subset \sH_{J'\imath}$ is another Hilbert basis of $\sH$, where  $Jx = J'x$ for $x\in Ker(|A-A^*|)^\perp$, then (\ref{ivTRN}) still holds for $N'$ in place of $N$.
\end{proposition}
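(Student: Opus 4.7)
My approach is, first, to split $A$ into its selfadjoint and anti-selfadjoint parts and show the selfadjoint part accounts entirely for $tr^\bR(A)$; second, to evaluate the anti-selfadjoint contribution directly using $Jx=x\imath$ on $\sH_{J\imath}$; and third, to show that the resulting formula is insensitive to the extension of $J$ off $Ker(|A-A^*|)^\perp$ via a short adjoint argument.

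\emph{Reduction to the anti-selfadjoint part.} Set $A_\pm := \tfrac{1}{2}(A\pm A^*)\in \gB_1(\sH)$. By $\bR$-linearity of $tr_N$ (Proposition \ref{propTRACE}(a)(ii)), $tr_N(A) = tr_N(A_+) + tr_N(A_-)$. Corollary \ref{corollary}(e) yields $tr_N(A_+) = tr(A_+)\in \bR$, independent of $N$, while Proposition \ref{propTRACE}(a)(i) applied to the anti-selfadjoint $A_-$ gives $\overline{tr_N(A_-)} = tr_N(-A_-) = -tr_N(A_-)$, so $Re(tr_N(A_-)) = 0$. Hence $tr^\bR(A) = tr(A_+)$, and it remains to prove $tr_N(A_-) = \tfrac{\imath}{2}\,tr(|A-A^*|)$.

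\emph{Key computation on $N$.} Taking adjoints of hypothesis (2), using $J^* = -J$ and the anti-selfadjointness of $A-A^*$, yields $A-A^* = |A-A^*|J$; combining with (2) gives $J|A-A^*| = |A-A^*|J$, so property (3) is automatic from (1)-(2). Moreover $|A-A^*| = -J(A-A^*)\in\gB_1(\sH)$ by the $^*$-ideal property (Proposition \ref{propTRACE1}(d)(ii)), and is selfadjoint, so $tr(|A-A^*|)\in\bR$ is basis-independent. For $x\in N\subset \sH_{J\imath}$, using $Jx = x\imath$, the commutation just derived, and right-$\bH$-linearity of both $|A-A^*|$ and the scalar product, I find $\langle x\,|\,J|A-A^*|\,x\rangle = \langle x\,|\,|A-A^*|\,Jx\rangle = \langle x\,|\,|A-A^*|\,x\rangle\,\imath$. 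Summing over $x\in N$ gives $tr_N(J|A-A^*|) = tr(|A-A^*|)\,\imath = \imath\cdot tr(|A-A^*|)$, the last equality because $tr(|A-A^*|)\in\bR$ commutes with $\imath$; halving produces (\ref{ivTRN}).

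\emph{Independence of the extension of $J$.} Let $J'$ be any anti-selfadjoint unitary with $J' = J$ on $Ker(|A-A^*|)^\perp$, and $N'\subset \sH_{J'\imath}$ a Hilbert basis. Since $|A-A^*|$ is selfadjoint, its range lies in $Ker(|A-A^*|)^\perp$, so $J'|A-A^*| = J|A-A^*| = A-A^*$ as operators on $\sH$. Thus $J'$ satisfies (2), and by the same adjoint computation as above, $J'$ automatically commutes with $|A-A^*|$. The key computation then runs verbatim with $(J',N')$ in place of $(J,N)$ and yields the identical right-hand side. This last step is the one subtle point of the argument: only partial agreement of $J'$ with $J$ is assumed, but the range inclusion coming from selfadjointness of $|A-A^*|$ is enough to make $J'$ inherit the polar-decomposition identity, and hence the commutation required to repeat the computation.
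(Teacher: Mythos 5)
Your proof is correct and follows essentially the same route as the paper: split $A$ into selfadjoint and anti-selfadjoint parts, identify $tr_N(A_+)$ with $tr^{\bR}(A)$, and evaluate $tr_N(J|A-A^*|)$ on a basis of $\sH_{J\imath}$ using $Jx=x\imath$. The only (cosmetic) differences are that you pass $J$ through $|A-A^*|$ via the commutation relation derived from taking adjoints of (2), where the paper moves $J$ to the left entry of the scalar product via $J^*=-J$, and your treatment of the $J'$ independence via $\overline{Ran(|A-A^*|)}=Ker(|A-A^*|)^\perp$ is somewhat more explicit than the paper's one-line remark.
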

\begin{proof} See Appendix \ref{Appendix}.
\end{proof}
We stress that other basis-independence properties of the trace  exist. Within the different approach of \cite{CGJ16}, for {\em every} $J=-J^{*}$ with $JJ=-I$ commuting with $A\in \gB_1(\sH)$,  $tr_N(A)$ is fixed  when the Hilbert basis $N$ varies in $\sH_{J\imath}$. (Such a $J$ does exist at least if $A$ is normal for Theorem 5.9 in \cite{GMP1}.) Our special choice of $J$ in Proposition \ref{lastPROP}  is always feasible for every $A\in \gB_1(\sH)$ though, in general, $J$ does not commute with $A$ and the invariant trace in $\sH_{J\imath}$ can be written in terms of real traces.

\section{A common statement of Gleason's theorem}
We are in a position to establish a common form of Gleason's theorem, valid for real, complex and quaternionic Hilbert spaces,  which uses the {\em real trace} instead of the {\em trace} appearing in Theorem \ref{teoG}. 

A {\bf convex body} is a subset $K\neq\emptyset$ of a real linear space $X$ such that $(1-\lambda)x+\lambda y\in K$ for every $x,y\in K$ and $\lambda\in [0,1]$. A point $\omega\in K$ said to be {\bf extremal} if it cannot be decomposed as $\omega = (1-\lambda) x +\lambda y$ with $\lambda\in (0,1)$ and $x, y\in K\setminus\{\omega\}$.

\subsection{Probability measures over $\cL(\sH)$}
\begin{definition}\label{defMU} 
	Let $\sH$ be a  real, complex or quaternionic Hilbert space. A
	$\sigma$-{\bf additive probability measure} over the lattice  $\cL(\sH)$ of orthogonal projectors over $\sH$  is a map  
	$\mu : \cL(\sH) \to [0,1]$ satisfying both 
	\begin{enumerate}[(i)]
		\item $\mu(I) = 1$,
		\item $\sum_{k\in K} \mu(P_k) = \mu\left( \vee_{k\in K} P_k\right)$,
	\end{enumerate}
	for $\{P_k\}_{k\in K} \subset \cL(\sH)$ with $K$ finite or countably infinite and $P_k \perp P_h$ for $k\neq h$. 

	We denote by $\cM(\sH)$ the convex body of $\sigma$-additive probability measures over  $\cL(\sH)$.
\end{definition}
\begin{remark}\label{remMU}  {\em $\null$
		\begin{enumerate}[(a)]
			\item Since $K$ is countable and $P_k P_h=0$ if $k\neq h$, the supremum  $\vee_{k\in K} P_k$ can always be computed as $\vee_{k\in K} P_kx = \sum_{k\in K}P_kx$ for every $x\in \sH$, the sum converging in the topology of $\sH$ if $K$ is infinite.  The proof is  identical in the three Hilbert space cases (see, e.g., \cite{M}, Theorem 7.22).
			\item $\cM(\sH)$ is to be understood as a convex subset of the set of functions $f:\cL(\sH)\rightarrow\bR$ which is clearly a real linear space with respect to the operations $(\alpha f+\beta g)(P):= \alpha f(P)+\beta g(P)$ for all $P\in\cL(\sH)$.
		\end{enumerate}
	 }
\end{remark}

\subsection{Fixing Varadarajan's statement of Gleason's theorem for $\bD=\bH$.}

We present and prove the statement of Gleason's theorem   completing the assertion with remarks about extremal measures. A further statement is added on the fact that, in spite of an apparent lack of information  embodied in the {\em real trace} if compared with the standard trace, probability measures are still able to distinguish between  elements of $\cL(\sH)$;
\begin{theorem}\label{GVt}  
	Let $\sH$ be a  real, complex or quaternionic separable Hilbert space with dimension greater than $2$. The following facts are true.
	\begin{enumerate}[(a)]
		\item  The class $\cM(\sH)$ of  $\sigma$-additive probability measures  $\mu$ over $\cL(\sH)$ is one-to one  with the class  $\cS(\sH)$  of  self-adjoint, positive,  operators $T\in \gB_1(\sH)$ with unit trace. This correspondence  is defined by the requirement  
		\begin{equation}\label{traceR}
			\mu(P) = tr^{\bR}(PT)\quad \mbox{ for all }P \in \cL(\sH)\:.
		\end{equation}
		That correspondence  preserves the real convex structures of $\cS(\sH)$ and $\cM(\sH)$\:.
		\item All orthogonal projectors  onto one-dimensional subspaces of $\sH$ belong to $\cS(\sH)$. They are precisely 
		the extremal element of $\cS(\sH)$ and are  one-to-one through (\ref{traceR}) with the extremal elements of $\cM(\sH)$.
		\item As a consequence of (\ref{traceR}), probability measures separate the elements of $\cL(\sH)$ (and evidently the elements of $\cL(\sH)$ separate the probability measures over $\cL(\sH)$)
	\end{enumerate}
\end{theorem}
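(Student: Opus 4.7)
The plan splits along the three items of the theorem. For part~(a), the forward direction $T\mapsto \mu_T$ with $\mu_T(P):=tr^{\bR}(PT)$ requires three checks. Normalization is Corollary~\ref{corollary}(e), giving $\mu_T(I)=tr(T)=1$. Non-negativity and the upper bound $\mu_T(P)\leq 1$ follow by rewriting $tr^{\bR}(PT)=tr(PTP)$ via Corollary~\ref{corollary}(g), with $PTP\geq 0$, and decomposing $1=tr^{\bR}(PT)+tr^{\bR}((I-P)T)$ by $\bR$-linearity. For $\sigma$-additivity on a family of orthogonal $\{P_k\}_{k\in K}$ with supremum $P$, Remark~\ref{remMU}(a) reduces matters to $tr^{\bR}(PT)=\sum_k tr^{\bR}(P_kT)$; I would expand on a Hilbert basis $N$ adapted to $T$ by Proposition~\ref{propdecT} and exchange the two sums using the absolute convergence guaranteed by Proposition~\ref{propTRACE}.

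For the converse direction of~(a), I would invoke the technically correct core of Varadarajan's argument~\cite{V2a,V2}: given $\mu\in\cM(\sH)$, that proof produces a self-adjoint positive trace-class $T$ with $\mu(P)=\sum_{x\in N}\langle x|PTx\rangle$ for some basis $N$. Since $\mu(P)$ is real, taking real parts turns the right-hand side into $tr^{\bR}(PT)$, which is now basis-independent by Corollary~\ref{corollary}. One must additionally verify that the $T$ provided by the classical argument satisfies our Definition~\ref{defdefT} of trace-class, and not merely the weaker condition~(\ref{condwrong}), which is where the new content of Proposition~\ref{propTRACE1} is needed. Uniqueness reduces to the claim that $tr^{\bR}(SP)=0$ for every $P\in\cL(\sH)$ and every self-adjoint $S\in\gB_1(\sH)$ implies $S=0$: specializing $P$ to rank-one projectors $P_\psi$ yields $\langle\psi|S\psi\rangle=0$ for every unit $\psi$, and the comments following~\textbf{(C)} in Section~\ref{sec1.1} then give $S=0$. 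Preservation of the convex structure is $\bR$-linearity from Corollary~\ref{corollary}(a).

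For part~(b), a rank-one projector $P_\psi$ clearly lies in $\cS(\sH)$. If $P_\psi=\lambda T_1+(1-\lambda)T_2$ with $T_i\in\cS(\sH)$ and $\lambda\in(0,1)$, then for every $\phi\perp\psi$ one has $0=\langle\phi|P_\psi\phi\rangle=\lambda\langle\phi|T_1\phi\rangle+(1-\lambda)\langle\phi|T_2\phi\rangle$ with non-negative summands, hence $\langle\phi|T_i\phi\rangle=0$, and Cauchy--Schwarz for the positive form $\langle\cdot|T_i\cdot\rangle$ forces $T_i\phi=0$; thus each $T_i$ has range in the span of $\psi$, and the unit-trace condition forces $T_i=P_\psi$. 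Conversely, given an extremal $T\in\cS(\sH)$, Proposition~\ref{propdecT} applied to $T\in\gB_\infty(\sH)$ gives $T=\sum_n \lambda_n P_n$ with $\lambda_n>0$, $\sum_n\lambda_n=1$, and one-dimensional $P_n$; if two coefficients were nonzero, splitting off the first summand would produce a non-trivial convex decomposition, a contradiction. Affinity of the bijection in~(a) then matches extremals on both sides. Part~(c) is immediate from~(a): given $P\neq Q$ in $\cL(\sH)$, pick a unit $\psi$ with $\langle\psi|P\psi\rangle\neq\langle\psi|Q\psi\rangle$, so that $\mu_{P_\psi}$ separates $P$ and $Q$. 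The main obstacle I foresee is reconciling Varadarajan's construction of $T$ with our stronger trace-class definition so that the real trace formalism applies cleanly; once that is in place, the remainder reduces to routine manipulation of the identities in Corollary~\ref{corollary}.
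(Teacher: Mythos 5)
Your overall architecture matches the paper's --- forward direction by direct computation with the real trace, converse by invoking the Varadarajan/Gleason frame-function core, extremality via the spectral decomposition of Proposition \ref{propdecT} --- but two sub-arguments are genuinely different and worth comparing. For $\sigma$-additivity you expand on a basis of eigenvectors of $T$ and interchange the two sums by absolute convergence (effectively Tonelli, since after writing $tr^{\bR}(P_kT)=\sum_{u}s(u)\|P_ku\|^2$ all terms are non-negative and $\sum_k\|P_ku\|^2=\|Pu\|^2$); the paper instead fixes a basis adapted to the limit projector $Q=\vee_kP_k$ and its summands and applies dominated convergence to $tr_N(Q_nTQ_n)\to tr_N(QTQ)$. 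Both are correct; yours is slightly more economical. More substantially, your proof that a rank-one projector $P_\psi$ is extremal (from $\langle\phi|P_\psi\phi\rangle=0$ for $\phi\perp\psi$ deduce $\langle\phi|T_i\phi\rangle=0$, hence $T_i\phi=\sqrt{T_i}\sqrt{T_i}\phi=0$, hence $Ran(T_i)\subset\psi\bD$ and unit trace pins down $T_i=P_\psi$) completely bypasses the paper's Lemma \ref{lemmatecnico} and the ensuing bookkeeping with the coefficients $q_r$, $s_r(u)$; it is shorter, works uniformly for $\bD=\bR,\bC,\bH$, and buys a cleaner proof of part (b). The paper's route via Lemma \ref{lemmatecnico} is heavier but self-contained at the level of elementary inequalities.

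The one point you flag but do not close --- that the operator $T$ produced by the classical argument is trace class in the sense of Definition \ref{defdefT} rather than merely satisfying (\ref{condwrong}) --- is resolved in the paper by a one-line observation that you should make explicit: Lemma 4.22 of \cite{V2} yields a bounded selfadjoint $T$ with $f_\mu(x)=\langle x|Tx\rangle$, and $f_\mu\geq 0$ gives $T\geq 0$, hence $|T|=T$; the frame-function normalization $\sum_{u\in N}f_\mu(u)=1$ then reads $\sum_{u\in N}\langle u|\,|T|\,u\rangle=1<+\infty$, which is exactly Definition \ref{defdefT}, and selfadjointness gives $tr(T)=1$ by Proposition \ref{propTRACE} (c). So the ``main obstacle'' you foresee evaporates by positivity; no appeal to the finer content of Proposition \ref{propTRACE1} is needed at that step. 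With that line inserted, your argument is complete and correct.
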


\begin{proof} 
(a) (i) {\em Uniqueness of} $T$. For a given $\mu$, the associated $T \in \cS(\sH)$ is unique if exists. Indeed, if $tr^{\bR}(PT)= \mu(P) = tr^{\bR}(PT')$ for 
all $P\in \cL(\sH)$, restricting to  one-dimensional projectors $P=  \psi\langle \psi| \cdot \rangle$ we have 
$\langle \psi |(T-T') \psi\rangle =0$ for every vector $\psi \in \sH$. Since $T-T'$ is bounded and selfadjoint we get $T-T'=0$ (see Section \ref{sec1.1}, just below Point {\bf (C)}).


(a)(ii) {\em Each $T\in \cS(\sH)$ defines a $\sigma$-additive probability measure $\mu \in \cM(\sH)$}. Let us now prove that selfadjoint positive operators $T\in \gB_1(\sH)$ with unit trace define probability measures (Definition \ref{defMU})  through (\ref{traceR}).  First focus on  $\sigma$-additivity. Taking Remark \ref{remMU} into account, this amounts to establish that 
$$
tr^{\bR}\left(\left(s\mbox{-}\sum_{k\in K} P_k\right)T\right) = \sum_{k\in K} tr^{\bR}(P_kT)
$$
for every class of orthogonal projectors $\{P_k\}_{k\in K}$ with $K$ finite or countably infinite and $P_kP_h=0$ for $h \neq k$ and where $s$- denotes the strong limit. Assuming $K=\bN$, define the orthogonal projectors $Q_n := \sum_{k=0}^nP_k$ and $Q :=  \mbox{s-}\lim_{n\to +\infty} Q_n = s\mbox{-}\sum_{k\in K} P_k$. What we have to prove is that $tr^{\bR}\left(QT \right) = \lim_{n\to +\infty} tr^{\bR}(Q_nT)$. 
Since the real trace is basis independent we can evaluate it on a (countable in our hypotheses)  Hilbert basis $N$ of $\sH$ obtained by completing  a Hilbert basis of
$Q(\sH)$ that, in turn, is  made of the union of Hilbert bases of each $P_n(\sH)$. Exploiting Corollary \ref{corollary} (g), we get
$
tr^{\bR}(Q_nT) =tr_N(Q_nTQ_n)
$
and similarly
$tr^{\bR}(QT) = tr_N(QTQ)$. 
What we have to prove 
is  that
\begin{equation}\label{equation}
\sum_{u\in N} \langle u | QTQ u\rangle = \lim_{n \to +\infty} \sum_{u\in N} \langle u| Q_nTQ_n u\rangle\:.
\end{equation}
Observe 
that, for every fixed $u\in N$,  we have $0\leq  \langle u | Q_{n}TQ_n u\rangle \leq  \langle u | QTQ u\rangle $ (indeed, if  $u\in P_m(\sH)$ then $\langle u | Q_{n}TQ_n u\rangle =0$ for $n<m$ and $\langle u | Q_{n}TQ_n u\rangle =  \langle u | QTQ u\rangle$ for $n\geq m$. If $u \in Q(\sH)^\perp$ both sides vanish). 
 Since $0\leq \langle u| Q_nTQ_n u\rangle \leq \langle u| QTQ u\rangle$ and $\sum_{u} \langle u | QTQ u\rangle < +\infty$, the dominated convergence theorem proves (\ref{equation}) completing the proof of $\sigma$-additivity.  Let us prove that the constructed measure ranges in $[0,1]$. So, take any $P\in \cL(\sH)$ and consider a Hilbert basis $N$ of $\sH$ completing a Hilbert basis $N_P$ of $P(\sH)$. Corollary \ref{corollary} (g) guarantees that $tr^{\bR}(PT) = tr_N(PTP)$. Now, exploiting the positivity of $T$ and Proposition \ref{propTRACE} (a)-(iii) we get
	\begin{equation*}
	0\le   tr_N(PTP)=\sum_{z\in N} \langle z|PTPz \rangle =\sum_{z\in N_P} \langle z|Tz \rangle \le \sum_{z\in N} \langle z|Tz \rangle =tr_N (T) =1\:,
	\end{equation*}
so that $tr^{\bR}(PT) \in [0,1]$.  Evidently $tr^{\bR}(IT)=1$. We proved that $T$ defines a $\sigma$-additive probability measure over $\cL(\sH)$ through (\ref{traceR}).

(iii) {\em Every $\mu\in \cM(\sH)$ can be written as in  (\ref{traceR}) for some $T\in \cS(\sH)$}. This is the hard part of the proof and we exploit the results established in \cite{V2} extending the original procedure by Gleason \cite{G}. Take a probability measure $\mu: \cL(\sH) \to [0,1]$. In view of its definition, for every Hilbert basis $N$, which is at most countable by hypothesis, we have 
$
1= \mu(I) = \sum_{u\in N}\mu(u\langle u|\cdot \rangle)
$
independently from the choice of $N$. Therefore, the map $f_\mu : \{ x\in \sH \:|\: ||x||=1\} \to [0,1]$ defined by $f_\mu(x) := \mu(x\langle x|\cdot \rangle)$ 
satisfies $\sum_{u\in N}f_\mu(u) =1$ independently from the choice of $N$. Such a map is called {\em frame-function} with weight $1$. As a consequence of Lemma 4.22 in \cite{V2}, since $f_\mu(x) \geq 0$ for every $x$, there is $T\in \gB(\sH)$ such that $T=T^*$ and 
$f_\mu(x) = \langle x|Tx\rangle$. In particular $T\geq 0$ because $f_\mu\geq 0$.
Since $1=  \sum_{u\in N}f_\mu(u)$ for every Hilbert basis $N$, we also have $1= \sum_{u\in N} \langle u|Tu\rangle$ which, from $T\geq 0$ can be re-written $1= \sum_{u\in N} \langle u||T|u\rangle$. According to Definition \ref{defdefT} we have proved that $T\in \gB_1(\sH)$ and also $tr(T)=1$ in view of (b) in Definition \ref{DEFTRV}.
It remains to prove that (\ref{traceR}) holds true. The identity $f_\mu(x) = \langle x|Tx\rangle$ can be rephrased to $\mu(\langle x| \cdot \rangle x) = \langle x|Tx\rangle$ for every $x \in \sH$ with $||x||=1$. If $P\in \cL(\sH)$, let $M$ be a Hilbert basis of $P(\sH)$ so that $Pz = \sum_{x\in M} x\langle x| z \rangle$ for every $z\in \sH$.
$\sigma$-additivity of $\mu$ implies $\mu(P) = \sum_{x\in M} \langle x|Tx \rangle$. Completing $M$ to a Hilbert basis $N$ of $\sH$, the found identity can be re-written
$\mu(P) = tr_M (PT)$. Taking the real part of both sides, we eventually get $\mu(P) = Re\left(tr_M (PT)\right) = tr^\bR(PT)$, where the last term no longer depends on $M$. The proof of (a) ends: the fact that  the correspondence between operators $T$ and measures $\mu$  preserves the real convex structures of $\cS(\sH)$ and $\cM(\sH)$ follows trivially from (a) in Corollary \ref{corollary}.

(b) We identify every $\mu\in \cM(\sH)$  with the corresponding $T\in \cS(\sH)$ according to (a).
  Consider  such a measure, that is $T\in  \gB_1(\sH)$ which is selfadjoint positive and $tr(T)=1$ and suppose that $T$ is not a one-dimensional orthogonal projector.  Now consider the spectral decomposition in Proposition \ref{propdecT},
$T = \sum_{u\in N}  u s(u)\langle u| \cdot \rangle $. There the  finite or countable orthonormal system of vectors $N$ (which is a Hilbert basis of $Ker(T)^\perp$) can be completed to a Hilbert basis of $\sH$ (by adding a Hilbert space of $Ker(T)$), the positive reals $s(u) \in (0,1]$ form the point spectrum of $T$ except for, possibly, the zero eigenvalue. Finally  they satisfy $\sum_{u\in N} s(u) =1$.  
If $T$ is not a one-dimensional orthogonal projector there are at least two different $u$, say $u_1$ and $u_2$ with  $s(u_1)>0$ and $1-s(u_1) \ge s(u_2) >0$. As a consequence, $T$  decomposes into the convex decomposition
$T= s(u_1) T_1 + (1- s(u_1)) T_2$ for 
$$
T_1 =  u_1 \langle u_1| \cdot \rangle \quad\mbox{and} \quad
T_2 :=  \sum_{u\neq u_1}   \frac{s(u)}{1- s(u_1)}u \langle u| \cdot \rangle\:.
$$ 
Notice that  (i) $T_1 \neq T_2$, (ii) $T_1, T_2 \neq 0$, (iii) $T_1,T_2 \in \gB_1(\sH)$ by construction, (iv) they are selfadjoint, (v)  $T_1,T_2 \geq 0$ and (vi) $tr(T_1)= tr(T_2)=1$, so $T_1$ and $T_2$ represent two different probability measures over $\cL(\sH)$. We conclude that $T$ cannot be extremal in the convex body of probability measures,
because it admits a non-trivial convex decomposition. 

To conclude the proof, we prove that  a one-dimensional orthogonal projector $P$ does not admit a non-trivial convex decomposition into a pair of different operators $T_1$ and $T_2$ representing probability measures over $\cL(\sH)$. So, suppose that $P = \lambda_1 T_1 + \lambda_2T_2$ with $T_1,T_2\in \gB_1(\sH)$ and $\lambda_1, \lambda_2 \in (0,1)$ such that $\lambda_1+\lambda_2 =1$.
Thus $P = PPP= \lambda_1 PT_1P + \lambda_2 PT_2P$.  As a consequence of  (d)(i) Proposition \ref{propTRACE1}, $PT_1P,  PT_2P\in \gB_1(\sH)$. In particular, if $P=  \psi \langle \psi | \cdot \rangle$, it must be  
$$
PT_rP =  \psi \langle \psi| T_r\psi \langle\psi|\cdot \rangle\rangle=   \psi \langle \psi|T_r\psi\rangle \langle \psi|\cdot\rangle = \psi q_r\langle \psi | \cdot \rangle\:,
$$ 
for $q_1,q_2 \ge 0$, the operators  $PT_rP$ being selfadjoint and positive.
Furthermore, completing $\psi$ to a Hilbert basis of $\sH$ and exploiting $tr(T_r) =1$ and the positivity of $T_r$ we see that $q_r=tr(PT_rP)\le 1$.
Finally, taking the trace of both sides of $P = \lambda_1 PT_1P + \lambda_2 PT_2P$ we have $\sum_i \lambda_iq_i =1$.
Summing up we have four reals $\lambda_1,\lambda_2,q_1,1_2$ such that
$$
\lambda_1,\lambda_2\in (0,1),\quad q_1,q_2\in [0,1],\quad \sum_{i=1}^2\lambda_i=1\quad\mbox{and}\quad \sum_{i=1}^2\lambda_i q_i=1\:.
$$
At this point we need the following result
\begin{lemma}\label{lemmatecnico}
	Let $2\le N\le \infty$ and $(p_n)_{n=0}^N\subset (0,1)$ and $(q_n)_{n=0}^N\subset [0,1]$ such that 
	$$
	\sum_{n=0}^N p_n=\sum_{n=0}^N p_nq_n=1\:,
	$$ 
	then it holds that $q_n=1$ for any choice of $n$.
\end{lemma}
\begin{proof}
See Appendix \ref{Appendix}.
\end{proof}
\noindent Exploiting Lemma \ref{lemmatecnico}, we get $q_1=q_2=1$. 
Since the operators $T_r$ are selfadjoint, positive, compact, trace-class and unit-trace, Proposition \ref{propdecT} yields
$
T_r=\sum_{z\in N_r}s_r(u) u \langle u|\cdot \rangle
$
where $N_r$ is a Hilbert basis of $Ker(T_r)^\perp$, $s_r(u)> 0$ and $\sum_{u\in N_3}s_r(u)=1$. Since also $PT_rP$ is unit-trace, completing $N_r$ to a Hilbert basis $N_r'$ of $\sH$ and exploiting Corollary \ref{corollary} (g) we get
\begin{equation*}
1=tr(PT_rP)=tr^\bR(PT_r)=\sum_{z\in N_r'} Re \langle z|PT_rz \rangle=\sum_{z\in N_r}s_r(u) \langle u|Pu \rangle =\sum_{u\in N_r}s_r(u)\|Pu\|^2\:.
\end{equation*}
Since $0\le \|Pu\|^2\le \|u\|^2=1$ and $\sum_{u\in N_r}s(u)=1$ we argue that $\|Pu\|=1$ for all $u\in N_r$ which is equivalent to $u\in P(\sH)$ for all $u\in N_r$. Since $P(\sH)$ is one-dimensional and generated by $\psi$ and the elements of $N_r$ are orthogonal to each other, it must be $N_r=\{u=\psi p_r\}$ for some $p_r\in\bH$ with $|p_r|=1$. At this point the proof would be complete, since $T_r=u\langle u|\cdot \rangle=\psi \langle \psi|\cdot\rangle=P$.
So, let us prove that $\|Pu\|=1$ for all $u\in N_r$. First, notice that if there exists $u_0\in N_r$ such that $s_r(u_0)=1$, then $\sum_{u\in N_r}s_r(u)=1$ and $s_r(u)> 0$ forces $N_r=\{u_0\}$ and so $1=s_r(u_0)\|Pu_0\|=\|Pu_0\|$. If, instead, all of the elements $u\in N_r$ satisfy $0<s_r(u)<1$, then the thesis follows from Lemma \ref{lemmatecnico} (notice that $N_r$ is at most countable).

(c) (Obviously, $\cL(\sH)$ separates the convex body of probability measures because they admit $\cL(\sH)$ as domain.) To prove that probability measures separate orthogonal projectors, suppose that $\mu(P)=\mu(P')$ for every probability measure $\mu$. As a consequence of (a),
$tr^{\bR}(PT)= tr^{\bR}(P'T)$ for every self-adjoint positive  $T\in \gB_1(\sH)$ with unit trace.  
Restricting to  one-dimensional projectors $T=  \psi\langle \psi| \cdot \rangle$, the written identity specialises to
$\langle \psi |(P-P') \psi\rangle =0$ for every vector $\psi \in \sH$. Since $P-P'$ is bounded, polarization identity proves $P-P'=0$
if $\bD=\bC,\bH$. The same result arises from polarization identity, for $\bD=\bR$, using the fact that $P-P'$ is also self-adjoint.
\end{proof}

\begin{remark}\label{REM} {\em $\null$ \begin{enumerate}[(a)]
		\item In case $\bD=\bR$ or $\bC$, the real trace in (\ref{traceR}) can be replaced by the standard trace without affecting the result because of (f) in Corollary \ref{corollary}, recovering the known statement of Gleason's theorem as in Theorem \ref{teoG}\:.
		\item As alternate possibilities in stating the theorem above, we observe that  right-hand side of (\ref{traceR}) satisfies  
	\beq
		tr^{\bR}(PT)=tr^{\bR}(TP)= tr^{\bR}(PTP) = tr(PTP) \label{tracerest}
		\eeq 
		which follows from Corollary \ref{corollary} (f) (notice that $tr^{\bR}(TP)= tr^{\bR}(TPP)= tr^{\bR}(PTP)$ since $P=PP$ and the real trace is cyclic).
\item  The sequence of  identities (\ref{tracerest})  proves  that the  standard trace can be used  in place of the real trace in some elementary physical computations involving  orthogonal projectors and statistical operators. A price to pay is  that the formulas have to be re-arranged in order to always deal  with {\em selfadjoint} arguments of the trace as  $\mu(P) = tr(PTP)$ (the other possibility $\mu(P) =
tr\left[\frac{1}{2}(PT+PT)\right]$ would be  a trivial rephrasing of $\mu(P) =tr^\bR(PT)$). However, passing to physical objects more complicated than probabilities as  the expectation value or the standard deviation of  (generally unbounded) observables (see identites (i) and (ii) in (a) and (b) of Proposition \ref{propAT} below),  this route would become technically very complicated and unnatural. As an example, replacing $P$ for  a (generally  unbounded)  selfadjoint operator $A$, the sequence of identities $tr^{\bR}(AT)=tr^{\bR}(TA)= tr^{\bR}(ATA) = tr(ATA)$ would result trivially false. Regardless subtle problems related with domains, the crucial obstruction is that $AA \neq A$.  Furthermore, even starting from $\mu(P^{(A)}_E)= tr(P^{(A)}_ETP^{(A)}_E)$ where $P^{(A)}_E$ is the projection-valued   measure of the selfadjoint operator $A$, the {\em quadratic} $P^{(A)}_E$-dependence  of $tr(P^{(A)}_ETP^{(A)}_E)$ makes impossible to take advantage of the spectral integration procedure to achieve  results similar to those asserted  in Proposition \ref{propAT} below regarding the expectation value and the standard deviation of $A$.
\end{enumerate}}
\end{remark}

\section{Compatibility with the standard physical formalism}
In physical applications in complex Hilbert spaces the standard notion of {\em trace}, instead of {\em real trace} 
is exploited. This section  proves that the notion of real trace is completely enough to deal with physical formalism and its use provides a common mathematical tool valid for real, complex and quaternionic formulations.

With the hypotheses of Theorem \ref{GVt}, extremal probability measures over $\cL(\sH)$ are called {\bf pure states} in the language of physicists. According to the statement of Gleason's theorem, they are represented by all operators of the form $T = \psi \langle \psi|\:\:\rangle$ for any unit vector $\psi \in \sH$ fixed up to a scalar $q\in \bD$ with $|q|=1$. This is because, as the reader can immediately prove, that is the general form of orthogonal projectors onto one-dimensional subspaces spanned by unit vectors $\psi$.
The remaining operators $T \in \cS(\sH)$ describing generic probability measures over $\cL(\sH)$ according to the theorem above, are called {\bf mixed states} or also {\bf statistical operators}. Pure and mixed states are called {\bf quantum states}.

\subsection{Observables}
According to the standard terminology of physicists, an {\bf observable} is a selfadjoint (generally unbounded) operator $A: D(A) \to \sH$, where $D(A)$ is a dense subspace in the Hilbert space describing a given quantum system. We henceforth assume it to be real, complex or quaternionic. {\bf Elementary observables} (also called {\bf elementary propositions}) are in particular $P \in \cL(\sH)$.
 $A$ can be spectrally decomposed 
$$A = \int_{\sigma(A)} \lambda dP^{(A)}(\lambda)\:,$$
where we have introduced the {\em projector-valued measure} (PVM) 
$$
\left\{P^{(A)}_E\right\}_{E\in \cB(\sigma(A))}\subset \cL(\sH)
$$
of $A$ over the Borel $\sigma$-algebra $\cB(\sigma(A))$ defined over the spectrum $\sigma(A)$ of $A$.
 That is a (closed) subset of $\bR$ because $A=A^*$. See, e.g., \cite{M} for the complex case, \cite{MO1} for the real case, and $\cite{GMP2}$ for the quaternionic case, noticing that $\sigma(A)$ should  be interpreted as the intersection of the {\em spherical spectrum} and a complex slice $\sigma(A)= \sigma_S(A) \cap \bC^+_i$. However as $A=A^*$, the spherical spectrum is completely included in $\bR$ and  this intersection turns out to be a subset of $\bR$ independent of the chosen complex slice $\bC_i \subset \bH$.
 The PVM is uniquely associated with $A$ and satisfies
 \begin{equation}\label{PVM}
 P^{(A)}_{\sigma(A)}=I,\quad P^{(A)}_{E\cap F}=P^{(A)}_E P^{(A)}_F\quad \mbox{and} \quad P^{(A)}_{\cup_{n=1}^\infty E_n}x=\sum_{n=1}^\infty P^{(A)}_{E_n}x
 \end{equation}
 for any $x\in\sH$ and for any $E,F,E_n\in \cB(\sigma(A))$ with the subsets $E_n$ pairwise disjoint.

If $A:D(A)\to \sH$ is selfadjoint and  $f : \bR \to \bR$ is a Borel-measurable function, the selfadjoint operator  
\begin{equation} 
f(A): = \int_{\sigma(A)} f(\lambda) dP^{(A)}(\lambda) \label{fA}
\end{equation}
is defined with domain
\begin{equation}\label{DomfA}
D(f(A)) = \left\{x\in \sH \:\left|\: \int_{\sigma(A)} |f(\lambda)|^2 d\mu^{(A)}_x(\lambda) <+\infty\right.\right\}\:,
\end{equation}
where we introduced the finite positive Borel measure associated with $x$ and $A$:
$$
\mu^{(A)}_x : \cB(\sigma(A)) \ni E \mapsto || P^{(A)}_E x||^2 \in  [0, ||x||^2]\:.
$$ 
Moreover, for any $x\in D(f(A))$ it holds that
\begin{equation}\label{equationf}
\langle x|f(A)x \rangle=\int_{\sigma(A)}f(\lambda) d\mu^{(A)}_x\quad\mbox{and}\quad \|f(A)x\|^2=\int_{\sigma(A)}|f(\lambda)|^2 d\mu^{(A)}_x\:.
\end{equation}

For $\bD=\bC$ and $\bD=\bH$ a larger class of measureable functions $f$ can be used (\cite{M,GMP2}), but we stick here to the real-valued ones because  they are completely sufficient for the rest of the work.

Physically speaking, the Borel sets $E\in \cB(\sigma(A))$ are the outcomes of measurement procedures of $A$, and $P^{(A)}_E$ is the elementary propositions  corresponding to the statement ``the measurement of $A$ belongs to $E$''.  More precisely, if $T$ is a {\em quantum state} over the Hilbert space $\sH$ satisfying the hypotheses of Theorem \ref{GVt}, the natural interpretation of 
\begin{equation}\label{muaT}
\mu_T^{(A)} : \cB(\sigma(A))\ni E \mapsto 
tr^{\bR}(P_E^{(A)}T)\in [0,1]
\end{equation}
is the probability to obtain $E$ after a measurement of $A$ in the quantum state $T\in \cS(\sH)$. In particular, if $T$ is {\em pure}, so that $T= \psi\langle \psi| \cdot \rangle$ for some unit vector $x \in \sH$, we have
\begin{equation}
\mu_T^{(A)}(E) = ||P_E^{(A)}\psi||^2  = \mu^{(A)}_\psi(E)\:.
\end{equation}
The proof is trivial, just complete $\{\psi\}$ as a Hilbert basis of $\sH$ and compute the real trace along that basis. This way, the {\bf expectation value} of $A$ with respect to the state $T$ can be defined
$$\langle A \rangle_T := \int_{\sigma(A)} \lambda \: d\mu_T^{(A)}(\lambda) \:,$$
provided the function $\sigma(A) \ni \lambda \to \lambda \in \bR$ is $\cL^1(\sigma(A),\mu_T^{(A)})$. Similarly, the {\bf standard deviation} is defined as 
  $$\Delta A_T  :=  \sqrt{\int_{\sigma(A)} (\lambda - \langle A \rangle_T)^2\:  d\mu_T^{(A)}(\lambda)} =   \sqrt{\int_{\sigma(A)} \lambda^2 \:  d\mu_T^{(A)}(\lambda)-  \langle A \rangle^2_T}\:,$$
provided $\sigma(A) \ni \lambda \to \lambda \in \bR$ is  $\cL^2(\sigma(A),\mu_T^{(A)})$.
(Notice $\cL^2(\sigma(A),\mu_T^{(A)}) \subset \cL^1(\sigma(A),\mu_T^{(A)})$ since the measure is finite.)

The next proposition establishes that, with our statement of Gleason's theorem, the usual formal results handled by physicists (see formulas in (b)-(d) below) are however valid when 
$\bD=\bR, \bC, \bH$. It happens  if systematically replacing the standard trace with $tr^{\bR}$ and assuming natural conditions on the states\footnote{Weaker necessary and sufficient conditions assuring  that   these formulas are valid can be found in \cite{M} in the complex case, referring to the Hilbert-Schmidt class of the operators we do not consider here.}.  Referring to domain issues in (b) and (c) below we observe that  from (\ref{DomfA}) we have  $D(A^2)\subset D(A)=D(|A|)$.

\begin{proposition} \label{propAT}
	Let $\sH$ be a real, complex or quaternionic Hilbert space satisfying the hypotheses of Theorem \ref{GVt}, $T\in \cS(\sH)$ a quantum state and $A: D(A) \to \sH$, densely defined, an observable (i.e., $A=A^*$). The following facts hold.
	\begin{enumerate}[(a)]
		\item $\mu_T^{(A)}$ as in (\ref{muaT}) is a well-defined probability measure over $\cB(\sigma(A))$.
		\item If  $Ran(T) \subset D(A)$   and  $|A|T \in \gB_1(\sH)$ (always valid if $A \in \gB(\sH)$), then 
		\begin{enumerate}[(i)]
			\item $\langle A \rangle_T$ is defined,
			\item $\langle A \rangle_T = tr^{\bR}(AT)$.
		\end{enumerate}
		\item  If  $Ran(T) \subset D(A^2)$   and  $|A|T, A^2T \in \gB_1(\sH)$ (always valid if $A \in \gB(\sH)$), then 
		\begin{enumerate}
			\item $\Delta A_T$ is defined,
			\item $\Delta A_T = \sqrt{tr^{\bR}(A^2T)- \left(tr^{\bR}(AT)\right)^2}$.
		\end{enumerate}
		\item Assume that $T= \psi\langle \psi| \:\rangle$  with $||\psi||=1$
		\begin{enumerate}[(i)]
			\item If $\psi \in D(A)$ then (b) is valid and $\langle A \rangle_T = \langle \psi|A \psi \rangle$,
			\item If $\psi \in D(A^2)$ then (c) is valid and $\Delta A_T = \sqrt{\langle \psi|A^2 \psi \rangle - \langle \psi|A \psi \rangle^2}$.
		\end{enumerate}
	\end{enumerate}
\end{proposition}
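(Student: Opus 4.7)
My plan is to handle the four parts sequentially, reducing each to the spectral decomposition of $T$, so that the real trace computations can be done on a basis adapted to $T$.

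First, for part (a), I would establish that $\mu_T^{(A)}$ takes values in $[0,1]$ and is $\sigma$-additive. Positivity of the PVM together with positivity of $T$ gives, via Corollary~\ref{corollary}(g), that $tr^{\bR}(P_E^{(A)}T)=tr(P_E^{(A)}TP_E^{(A)})\geq 0$, while $\mu_T^{(A)}(\sigma(A))=tr(T)=1$. For countable additivity on pairwise disjoint $\{E_n\}\subset\cB(\sigma(A))$, the strong-sum expression $P^{(A)}_{\cup_n E_n}x=\sum_n P^{(A)}_{E_n}x$ from (\ref{PVM}) lets me reuse verbatim the dominated-convergence argument already employed in the proof of Theorem~\ref{GVt}(a)(ii) (with $Q_n=\sum_{k\leq n}P^{(A)}_{E_k}$ and $Q=P^{(A)}_{\cup E_n}$), since that argument only used the strong-sum structure, not the specific provenance of the projectors.

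The key reduction for (b), (c), (d) is to use the spectral decomposition of $T$ from Proposition~\ref{propdecT}: write $T=\sum_n p_n \psi_n\langle\psi_n|\cdot\rangle$ with $p_n>0$, $\sum_n p_n=1$, and $\{\psi_n\}$ an orthonormal system completed by a basis of $\mathrm{Ker}(T)$ to a Hilbert basis $N$ of $\sH$. The condition $Ran(T)\subset D(A)$ forces $\psi_n=p_n^{-1}T\psi_n\in D(A)$. Evaluating the real trace on $N$ and using $T\psi_n=p_n\psi_n$ while $Te_k=0$ on the $\mathrm{Ker}(T)$-part, I obtain
$$
\mu_T^{(A)}(E)=tr^{\bR}(P_E^{(A)}T)=\sum_n p_n\,\langle\psi_n|P_E^{(A)}\psi_n\rangle=\sum_n p_n\,\mu^{(A)}_{\psi_n}(E),
$$
i.e.\ $\mu_T^{(A)}=\sum_n p_n\,\mu^{(A)}_{\psi_n}$ as a convex combination of finite Borel measures. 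The same basis computation, applied to the trace-class operator $|A|T$ (trace class by hypothesis), gives via (\ref{equationf})
$$
tr^{\bR}(|A|T)=\sum_n p_n\langle\psi_n||A|\psi_n\rangle=\sum_n p_n\int_{\sigma(A)}|\lambda|\,d\mu^{(A)}_{\psi_n}(\lambda)=\int_{\sigma(A)}|\lambda|\,d\mu_T^{(A)}(\lambda)<+\infty,
$$
where the last equality is Tonelli's theorem. This proves $\lambda\in\cL^1(\sigma(A),\mu_T^{(A)})$, so $\langle A\rangle_T$ is defined. Writing $A=\mathrm{sgn}(A)\,|A|$ with the bounded Borel function $\mathrm{sgn}$, the ideal property of $\gB_1(\sH)$ gives $AT\in\gB_1(\sH)$, and a perfectly analogous basis calculation combined with Fubini (justified by the just-established $L^1$ bound) yields $tr^{\bR}(AT)=\sum_n p_n\langle\psi_n|A\psi_n\rangle=\int\lambda\,d\mu_T^{(A)}=\langle A\rangle_T$, proving (b). Part (c) follows by repeating the argument with $A$ replaced by $A^2$ (using both hypotheses $|A|T,A^2T\in\gB_1$), and invoking $\Delta A_T^2=\langle A^2\rangle_T-\langle A\rangle_T^2$.

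Finally, (d) is a direct corollary: for $T=\psi\langle\psi|\cdot\rangle$ with $\psi\in D(A)$, the operator $|A|T$ is the rank-one operator $|A|\psi\langle\psi|\cdot\rangle$, which lies in $\gB_1(\sH)$ since $|A|\psi\in\sH$, and similarly $Ran(T)=\mathrm{span}(\psi)\subset D(A)$; so (b) applies and the real trace on the basis $\{\psi\}\cup$(basis of $\psi^\perp$) collapses to $Re\,\langle\psi|A\psi\rangle=\langle\psi|A\psi\rangle$, the latter being real by selfadjointness of $A$. The case $\psi\in D(A^2)$ is analogous via (c). The main technical obstacle I anticipate is the rigorous justification of the interchange of summation over $n$ and spectral integration in $\lambda$; this is what forces the $|A|T\in\gB_1$ hypothesis, which supplies precisely the $L^1$-majorant needed to invoke Fubini on the joint measure $\sum_n p_n\,\mu^{(A)}_{\psi_n}$.
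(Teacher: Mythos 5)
Your proposal is correct, but its analytic core differs from the paper's. Both proofs evaluate the real trace on a Hilbert basis adapted to $T$ (eigenvectors of $T$ completed by a basis of $Ker(T)$), and both obtain $AT\in\gB_1(\sH)$ from $AT=\mathrm{sgn}(A)|A|T$ and the two-sided ideal property. The difference lies in how the identity $tr^{\bR}(AT)=\int_{\sigma(A)}\lambda\,d\mu_T^{(A)}$ is justified. The paper approximates $id$ by simple functions $s_n$ with $0\leq|s_n|\leq|s_{n+1}|\leq|id|$, forms the bounded operators $A_n=\int s_n\,dP^{(A)}$, computes $tr_M(A_nT)=\int s_n\,d\mu_T^{(A)}$ by finite linearity and cyclicity of the real trace, and then passes to the limit twice (monotone convergence for $|A_n|$, dominated convergence for $A_n$, on both the integral side and the trace-sum side). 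You instead avoid any operator approximation: you first identify $\mu_T^{(A)}=\sum_n p_n\,\mu^{(A)}_{\psi_n}$ as a convex combination of the vector measures, apply the vector-state identity (\ref{equationf}) to each eigenvector $\psi_n\in D(A)$ (which is where $Ran(T)\subset D(A)$ enters, exactly as in the paper), and then interchange $\sum_n$ with $\int_{\sigma(A)}$ via Tonelli for $|id|$ and Fubini for $id$, the $L^1$-majorant being supplied by $|A|T\in\gB_1(\sH)$. Your route is arguably more direct and makes transparent why the hypothesis $|A|T\in\gB_1(\sH)$ (rather than merely $AT\in\gB_1(\sH)$) is the natural one; the paper's route has the minor advantage of not requiring the explicit identification of $\mu_T^{(A)}$ as a convex combination, since all convergence is handled at the level of the trace sums. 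Parts (a) and (d) are treated essentially as in the paper, and your observation that $|A|T$ and $A^2T$ are automatically trace class in the rank-one case (so that (d) reduces to (b) and (c)) matches the paper's intent.
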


\begin{proof} (a) Taking the definition of projector valued measure into account, the proof is a trivial re-adaptation of the part (a)(ii) of the proof of Theorem \ref{GVt}.
	
(b)(i) We prove the thesis using the weak hypotheses (1) and (2), since they are automatically true if $A\in \gB(\sH)$. As already stressed,  $D(|A|) = D(A)$ so $Ran(T) \subset D(A)=D(|A|)$ is true and both $AT$, $|A|T$ are well defined with the said hypotheses. Next the polar decomposition theorem for (generally unbounded) self-adjoint operators gives  $A= U|A|$ with $|A|$ and $U := \mbox{sign}(A)\in\gB(\sH)$ defined as in (\ref{fA})). As a consequence,  $AT = U|A|T\in \gB_1(\sH)$ because $U\in \gB(\sH)$ and $\gB_1(\sH)$ is two-sided ideal.
 Now, referring to the Borel $\sigma$-algebra over $\sigma(A) \subset \bR$ we can construct \cite{M} a sequence of real simple functions 
 $$
 s_n= \sum_{i_n\in\cI_n}c^{(n)}_{i_n}\chi_{E_{i_n}^{(n)}}:  \sigma(A) \to \bR\quad \mbox{with } c_{i_n}^{(n)}\in\bR,\ \mbox{and}\ \cI_n\mbox{ finite}
 $$ 
 which satisfies
 \begin{equation}\label{monot}
 0\leq |s_n| \leq |s_{n+1}| \leq |id|\:, \quad s_n \to id \quad \mbox{point-wise for $n\to +\infty$,}
 \end{equation}
 where $id : \sigma(A) \ni \lambda  \mapsto \lambda \in \bR$.
By direct application of the given definitions, if 
$$
A_n := \int_{\sigma(A)} s_n dP^{(A)}=\sum_{i_n\in\cI_n}c^{(n)}_{i_n}P^{(A)}_{E_{i_n}^{(n)}} \in \gB(\sH)\:,
$$ 
exploiting (\ref{equationf}), the monotone and  the dominated convergence theorems, we have both
\begin{equation}\label{convA} 
\langle \psi| A_n \psi\rangle \to \langle \psi |A \psi\rangle\:, \quad 
\langle \psi| |A_n| \psi\rangle \to \langle \psi ||A| \psi\rangle
\quad \forall \psi \in D(A)\quad\mbox{as $n \to +\infty$}
\end{equation}
and also
\begin{equation}
|\langle \psi| A_n \psi\rangle | \leq \langle \psi| |A_n| \psi\rangle \leq  \langle \psi| |A| \psi\rangle \:.\label{convA2}
\end{equation}
On the other hand, if $M$ is a Hilbert basis of $\sH$ obtained by completing a Hilbert basis $N$ of $Ker(T)^\perp$ made of eigenvectors of $T$ according to Proposition \ref{propdecT}, exploiting Corollary \ref{corollary} (f)(ii) and Proposition \ref{propTRACE} (e) we have both
\begin{equation}\label{convAA'}
\begin{split}
tr_M(A_nT) &= tr_M\left(\sum_{i_n\in\cI_n}c^{(n)}_{i_n} P^{(A)}_{E^{(n)}_{i_n}}T\right)=\sum_{i_n\in\cI_n}c^{(n)}_{i_n} tr_M(P^{(A)}_{E^{(n)}_{i_n}}T)=\sum_{i_n\in\cI_n}c^{(n)}_{i_n} tr_M(TP^{(A)}_{E^{(n)}_{i_n}})=\\
&=\sum_{i_n\in\cI_n}c^{(n)}_{i_n} tr^\bR(TP^{(A)}_{E^{(n)}_{i_n}})=\sum_{i_n\in\cI_n}c^{(n)}_{i_n} \mu_T(E^{(n)}_{i_n})=\int_{\sigma(A)}s_n d\mu_T^{(A)}
\end{split}
\end{equation}
and similarly
\begin{equation}\label{convAA}
 tr_M(|A_n| T) = \int_{\sigma(A)} |s_n| \: d\mu_T^{(A)}\:.
\end{equation}
Looking at the identity (\ref{convAA}), by monotone convergence theorem, for $n\to +\infty$,
$$
tr_M(|A_n| T)=\int_{\sigma(A)} |s_n|(\lambda) \: d\mu_T^{(A)}(\lambda) \to \int_{\sigma(A)} |\lambda| \: d\mu_T^{(A)}(\lambda)\:,
$$
and simultenously we have
$$tr_M(|A_n|T) = \sum_{u\in N} s(u) \langle u| |A_n| u\rangle \to \sum_{u\in N} s(u) \langle u| A u\rangle = tr_M(|A|T)\:,$$
 where $s(u)\geq 0$ are the eigenvalues of $T$,
again by monotone convergence theorem and (\ref{convA}). Putting all together and taking the real part, we get
$$
tr^{\bR}(|A|T) =  \int_{\sigma(A)} |\lambda| \: d\mu_T^{(A)}(\lambda)\:.
$$
We have in particular established that the integral in the right-hand side is finite (because the left-hand side exists by hypothesis) and thus 
$\langle A\rangle_T$ is well defined.

(ii) Let us look at the identity in (\ref{convAA'}). From the dominated convergence theorem
taking (\ref{monot}) into account, we obtain for $n\to \infty$
$$ 
tr_M(A_nT)=\int_{\sigma(A)} s_n(\lambda) \: d\mu_T^{(A)} \to  \int_{\sigma(A)} \lambda \: d\mu_T^{(A)}\:.
$$
On the other hand, 
$$tr_M(A_nT) = \sum_{u\in N}  \langle u| A_n u\rangle s(u) \to \sum_{u\in N}  \langle u| A u\rangle s(u)= tr_M(AT)\:,$$
where we have once again applied the dominated convergence theorem as is permitted by (\ref{convA2}).
Putting all together and taking the real part we get
$$tr^{\bR}(AT) =  \int_{\sigma(A)} \lambda \: d\mu_T^{(A)}(\lambda) =: \langle A\rangle_T\:,$$
concluding the proof of (ii).

(c) The proof is strictly analogous to that of (b) also noticing that the hypotheses of (c) implies those of (b) and that $\cL^2(\sigma(A),\mu_T^{(A)}) \subset \cL^1(\sigma(A),\mu_T^{(A)})$ because $\mu_T^{(A)}$ is finite.\\
(d) The thesis consists of trivial subcases of (b) and (c) in particular completing $\{\psi\}$ to a Hilbert basis of $\sH$ to be used to explicitly compute the various traces.
\end{proof}

\subsection{Symmetries}
Symmetries (including time evolution) of a quantum system described on a Hilbert space $\sH$ can be represented in terms of  various transformations of mathematical structures entering the game (see \cite{Landsman,M} for  exhaustive surveys when $\bD=\bC$). We only say that, when the set of elementary observables consists of the whole $\cL(\sH)$ (absence of superselection rules and gauge group), symmetries are represented by unitary operators $U : \sH \to \sH$ {\em defined up to signs for $\bD= \bR, \bH$};  they are  represented by either unitary or anti-unitary operators $U : \sH \to \sH$ {\em defined up to phases for $\bD=  \bC$} (see \cite{V2}). Symmetries act on quantum states in a standard way: $T \to UTU^{-1}$, where it is easy to prove that $UTU^{-1}$ is still a quantum state if $T$ is. It is more strongly evident that, for a fixed $U$,  the map $T \to UTU^{-1}$ defines an {\em automorphism} of the space of the  quantum states $\cS(\sH)$  viewed as a convex body in the real vector space $\gB_1(\sH)$.

 For $\bD =\bR$ or $\bC$ that action on states has a corresponding {\em dual action} on observables according to the real and complex formulation of Gleason's theorem as presented in Theorem \ref{teoG}. There, the trace is cyclic so that
\begin{equation}\label{dual}
tr(P\:UTU^{-1})= tr(U^{-1}PU\:T)
\end{equation}
and, evidently, $U^{-1}PU \in \cL(\sH)$ if $P\in \cL(\sH)$. 
It is easy to see that, for a fixed $U$,  this map defines an {\em automorphism} of the orthocomplemented lattice $\cL(\sH)$
(all this surely holds for the unitary case and it can be proved to hold also in the antiunitary case for $\bD=\bC$).

 Summing up, symmetries induced by (anti) unitaries $U : \sH \to \sH$ can be viewed as automorphisms of the real convex body of the states or, alternatively, of the orthomodular lattice of (elementary) observables: 
$$
\cS(\sH) \ni  T \to UTU^{-1} \in \cS(\sH) \:,  \quad \cL(\sH) \ni P \to U^{-1}PU \in \cL(\sH)\:.
$$
and this duality interplay corresponds to the physical fact that, looking at measurements,  the result of the action of a symmetry on a state can be obtained  by a corresponding dual action on observables keeping fixed the state. Everything is encoded in (\ref{dual}), and that identity is responsible for several crucial theoretical tools in theoretical physics, like the duality between {\em Schr\"odinger picture} and  {\em Heisenberg picture} dealing with time evolution.  All that holds for $\bD=\bR, \bC$.  Due to (f)(i) in Corollary \ref{corollary}, with this choice of $\bD$, (\ref{dual}) can be equivalently  re-written without loss of information replacing $tr$ for  $tr^{\bR}$, since $P$, $UTU^{-1}$ and  $T$, $U^{-1}PU$ are pairs of self-adjoint operators. 

The next straightforward result states that (\ref{dual}) survives the extension to the quaternionic formulation, provided $tr$ is replaced for $tr^{\bR}$ according to Theorem \ref{GVt}. More generally:
\begin{proposition} 
	Let $\sH$ be a real, complex or quaternionic Hilbert space  and $A \in \gB_1(\sH)$, $B \in \gB(\sH)$ or {\em vice versa}. It holds that
	\begin{equation}\label{dual2}
		tr^{\bR}(A\:UBU^{-1})= tr^{\bR}(U^{-1}AU\:B)
	\end{equation}
for every operator $U: \sH \to \sH$ which is unitary if $\bD=\bR, \bH$ or indifferently unitary or anti-unitary if $\bD=\bC$.

In particular, the thesis is true if  $A=P\in \cL(\sH)$ and $B=T \in \cS(\sH)$.
\end{proposition}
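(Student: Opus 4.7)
The strategy rests on the cyclicity of the real trace (Corollary \ref{corollary}(d)) combined with the two-sided ideal property of $\gB_1(\sH)$ inside $\gB(\sH)$ (Proposition \ref{propTRACE1}(d)(ii)); the unitary and anti-unitary sub-cases are treated separately.

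For the \emph{unitary case}, valid for every $\bD$, I factor $AUBU^{-1}=(AU)(BU^{-1})$ with $AU\in\gB_1(\sH)$ (by the ideal property) and $BU^{-1}\in\gB(\sH)$. Cyclicity then yields $tr^{\bR}(AUBU^{-1})=tr^{\bR}((BU^{-1})(AU))=tr^{\bR}(B(U^{-1}AU))$, and a second application of cyclicity with $U^{-1}AU\in\gB_1(\sH)$ and $B\in\gB(\sH)$ produces $tr^{\bR}((U^{-1}AU)B)$, which is the desired identity. The \textit{vice versa} case ($A\in\gB(\sH)$, $B\in\gB_1(\sH)$) is identical after relabelling, since $UBU^{-1}\in\gB_1(\sH)$ again by the ideal property.

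In the \emph{anti-unitary case} (only $\bD=\bC$) cyclicity is not directly available, because $UB$ and $U^{-1}A$ are anti-linear and hence do not lie in $\gB(\sH)$. My plan is to establish the auxiliary statement that, for every anti-unitary $U$ and every $T\in\gB_1(\sH)$, one has $U^{-1}TU\in\gB_1(\sH)$ and $tr^{\bR}(U^{-1}TU)=tr^{\bR}(T)$, and then to apply it to $T=AUBU^{-1}$. The trace-class inclusion would follow from $(U^{-1}TU)^{*}(U^{-1}TU)=U^{-1}T^{*}TU=U^{-1}|T|^{2}U$: uniqueness of the positive square root together with positivity and self-adjointness of $U^{-1}|T|U$ (both immediate from $\langle Ux|Uy\rangle=\overline{\langle x|y\rangle}$) give $|U^{-1}TU|=U^{-1}|T|U$, and then for any Hilbert basis $\{e_n\}$ the identity $\langle x|U^{-1}y\rangle=\overline{\langle Ux|y\rangle}$ yields $\sum_n\langle e_n||U^{-1}TU|e_n\rangle=\sum_n\langle Ue_n||T|Ue_n\rangle<+\infty$, since $\{Ue_n\}$ is again a Hilbert basis. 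The same anti-unitary identity, this time applied without the absolute value, gives $tr^{\bR}(U^{-1}TU)=Re\sum_n\overline{\langle Ue_n|TUe_n\rangle}=tr^{\bR}_{\{Ue_n\}}(T)=tr^{\bR}(T)$ by basis independence of the real trace (Proposition \ref{propTRACE}(d)(i)). Applying the auxiliary statement to $T=AUBU^{-1}\in\gB_1(\sH)$ (trace class by the ideal property, since $UBU^{-1}\in\gB(\sH)$ as the composition of two anti-linear and one linear map) and simplifying $U^{-1}(AUBU^{-1})U=(U^{-1}AU)B$ finishes the proof. The main obstacle is precisely this anti-linear variant: cyclicity in its stated form does not apply, so one must substitute an ad hoc basis-change argument while keeping careful track of the complex conjugations introduced by the anti-unitarity.
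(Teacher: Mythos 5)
Your proof is correct and, in the unitary case, follows exactly the paper's route: the paper's own proof is the one-line observation that the identity is an immediate consequence of the two-sided ideal property of $\gB_1(\sH)$ (Proposition \ref{propTRACE1}(d)(ii)) together with cyclicity of the real trace (Proposition \ref{propTRACE}(d)(ii)), which is precisely your factorization $A\,UBU^{-1}=(AU)(BU^{-1})$ followed by two applications of cyclicity. Where you genuinely add value is the anti-unitary case for $\bD=\bC$: the paper treats it as equally ``immediate,'' but the cyclicity statement is formulated for $B\in\gB(\sH)$, i.e.\ for linear operators, so an anti-unitary $U$ cannot simply be absorbed into one of the two factors. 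Your auxiliary lemma --- that $|U^{-1}TU|=U^{-1}|T|U$ via uniqueness of the positive square root, hence $U^{-1}TU\in\gB_1(\sH)$ with $tr^{\bR}(U^{-1}TU)=tr^{\bR}(T)$ by transporting the Hilbert basis and taking real parts to kill the conjugation --- is sound and supplies exactly the argument the paper leaves implicit. One cosmetic remark: in the anti-unitary \emph{vice versa} case ($A\in\gB(\sH)$, $B\in\gB_1(\sH)$) the inclusion $UBU^{-1}\in\gB_1(\sH)$ comes from your auxiliary lemma rather than from the ideal property, so the phrase ``again by the ideal property'' should be adjusted there; this does not affect correctness.
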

\begin{proof} This is an immediate consequence of (d)(ii) Proposition \ref{propTRACE1} and (d)(ii)  Proposition \ref{propTRACE}.
\end{proof}
As the last result, we prove that the action of a continuous symmetry makes the probabilities computed through $tr^{\bR}$ continuous as well, as it is expected from physics\footnote{(Continuous) groups of symmetries are more generally described in terms of  unitary projective representations  due to the "phase" ambiguity in associating unitaries to symmetries (see, e.g., Ch.12 of  \cite{M} for the complex case and \cite{V2} for the general case),  however we only stick here to the more elementary case.}.
 \begin{proposition}
 Let $\sH$ be a real, complex or quaternionic Hilbert space, $A \in \gB(\sH)$ and $B^*=B\in \gB_1(\sH)$ with $B\geq 0$ and $G \ni g \mapsto U_g$ a strongly continuous unitary representation of the topological group $G$. Then
the map $$G \ni g \mapsto tr^{\bR}(A\:U_gBU_g^{-1})$$ is continuous.
In particular the thesis is true if  $A=P\in \cL(\sH)$ and $B=T \in \cS(\sH)$.
\end{proposition}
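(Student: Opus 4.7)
The plan is to reduce $g\mapsto tr^{\bR}(AU_gBU_g^{-1})$ to a uniformly convergent series of continuous functions by transporting the eigenbasis of $B$ under the representation. Since $B\geq 0$ is trace class it is compact by Proposition \ref{propTRACE1} (b), so Proposition \ref{propdecT} furnishes an at most countable orthonormal set $N\subset\sH$, which is a Hilbert basis of $Ker(B)^\perp$, consisting of eigenvectors of $B$ with strictly positive eigenvalues $s(u)$, and $\sum_{u\in N}s(u)=tr(B)<+\infty$ (computed directly on $N$ completed by any basis of $Ker(B)$).

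Next I would set $B_g:=U_gBU_g^{-1}$, which is again positive, self-adjoint and trace class; the identity $Bu=us(u)$ gives at once $B_g(U_gu)=(U_gu)s(u)$, so $\{U_gu:u\in N\}$ is a Hilbert basis of $Ker(B_g)^\perp$ made of eigenvectors of $B_g$ with the same eigenvalues $s(u)$. Computing the (basis-independent, by Proposition \ref{propTRACE} (d)(i)) real trace $tr^{\bR}(AB_g)$ on the basis obtained by completing $\{U_gu:u\in N\}$ with an orthonormal basis of $Ker(B_g)$ (which contributes nothing since $B_g$ annihilates those vectors), and using $s(u)\in\bR$, I would obtain
$$
tr^{\bR}(AU_gBU_g^{-1}) \;=\; \sum_{u\in N} s(u)\, Re\langle U_gu \,|\, AU_gu\rangle.
$$

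For each fixed $u\in N$, strong continuity of $g\mapsto U_gu$ together with boundedness of $A$ makes $g\mapsto Re\langle U_gu\,|\,AU_gu\rangle$ continuous. The main step, though routine, is the interchange of the limit with the possibly infinite sum over $N$; this follows from the Weierstrass M-test, since each summand is bounded in modulus by $s(u)\|A\|$ (using $\|U_gu\|=\|u\|=1$) and $\sum_{u\in N}s(u)\|A\|=\|A\|\,tr(B)<+\infty$. Uniform convergence on $G$ then yields continuity of the sum, and the stated special case $A=P\in\cL(\sH)$, $B=T\in\cS(\sH)$ is an immediate instance.
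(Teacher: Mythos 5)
Your proof is correct and follows essentially the same route as the paper's: both arguments reduce the real trace to the series $\sum_{u\in N}s(u)\,Re\langle U_gu|AU_gu\rangle$ over the eigenbasis of $B$ and justify the interchange of limit and sum via the domination $s(u)\|A\|$ with $\sum_u s(u)=\|B\|_1<+\infty$. The only cosmetic difference is that the paper first invokes cyclicity of $tr^{\bR}$ to work with $tr^{\bR}(U_g^{-1}AU_g\,B)$ in the fixed eigenbasis of $B$ and applies dominated convergence at the neutral element, whereas you transport the eigenbasis by $U_g$ and use the Weierstrass M-test for uniform convergence; the resulting series and estimates are identical.
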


\begin{proof} We know that  $tr^{\bR}(A\:U_gBU_g^{-1}) = tr^{\bR}(U_g^{-1}AU_g\:B)$ so we prove the thesis in this second form.
It is clear that, as $U_gAU_g^{-1}\in \gB(\sH)$  if $A \in \gB(\sH)$ and $U_{gg'}= U_gU_{g'}$, continuity at $g$ is equivalent to continuity at the neutral element $e$ of $G$. Let us prove it.  If $N\subset \sH$ is a Hilbert basis,
$ tr^{\bR}(U_g^{-1}AU_g\:B)=  Re\left( tr_N(U_g^{-1}AU_g\:B)\right)$.
To conclude, it is enough proving that the right-hand side tends to  $Re\left( tr_N(U_e^{-1}AU_e\:B)\right) = Re\left( tr_N(A\:B)\right)$
 as $g \to e$ with a suitable choice of the  basis $N$.  It is convenient to chose $N$ as the basis  of eigenvectors $u$ of $B$ that exists for Proposition \ref{propdecT} (completing the basis of $Ker(B)^\perp$ by adding a Hilbert basis of $Ker(B)$). Non-vanishing eigenvalues are real numbers $s(u)>0$ (because $B\geq 0$) and
$tr_N(U_g^{-1}AU_g\:B)= \sum_{u} \langle u| U_g^{-1}AU_g u\rangle s_u = \sum_{u} s_u || AU_g u||^2$. Since $B=|B|$, we have 
$\sum_u s(u) = ||B||_1 <+\infty$ from (\ref{traceagg})  and $ || AU_g u||^2 \leq ||A||\:||U_g||\: ||u||\leq ||A||$, the dominated convergence theorem proves that
$tr_N(U_g^{-1}AU_g\:B) \to tr_N(U_e^{-1}AU_e\:B)$ as $g\to e$, concluding the proof using the fact that $Re : \bD \to \bR$ is continuous.
\end{proof}

\section*{Acknowledgments} The authors thank  J. Gantner for drawing their attention on the subject and A. Perotti for pointing out \cite{T}. 

\appendix
\section{Proof of some propositions}\label{Appendix}

\begin{proof}[{\bf Proof of Proposition \ref{propTRACE1}}]
	In the complex case all statements are established, e.g., in Theorems 4.31 and 4.34, including Remark 4.35, of \cite{M}. The proofs for the real case are identical. What we need to prove is that (a)-(e) are valid for the quaternionic case, too. The route we follow is a reduction procedure. Observe that, if $(\sH, \langle\:|\:\rangle)$ is a quaternionic Hilbert space, the set $\sH$ becomes a real Hilbert space when considering only real linear combinations among the quaternionic ones and making use of the real symmetric scalar product $(x|y):= Re \langle x| y\rangle$ for $x,y \in \sH$. In particular, completeness survives this change of viewpoint because $||x|| = \sqrt{\langle x| x\rangle} = \sqrt{( x| x)}$ if $x\in \sH$. In the rest of the proof, the real afore-mentioned Hilbert space constructed out of the quaternionic Hilbert space $\sH$ will be denoted by $\sH_\bR$.
We stress that as sets $\sH = \sH_\bR$, the difference stays in the structures over these sets. The following lemma will be useful.
\begin{lemma}\label{lemma}
	Let $(\sH, \langle \:|\:\rangle)$ be a quaternionic Hilbert space and $(\sH_\bR, (\:|\:))$ the associated real Hilbert space as said above. The following facts are true.
	\begin{enumerate}[\bf (\ref{lemma}a)]
		\item $\gB(\sH) \subset \gB(\sH_\bR)$ is exactly made of the $\bR$-linear operators in $\gB(\sH_\bR)$ commuting with 
		the three $\bR$-linear maps $\sH_\bR \ni x \mapsto xi\in \sH_\bR$, $\sH_\bR \ni x \mapsto xj\in \sH_\bR$ and $\sH_\bR \ni x \mapsto xk \in\sH_\bR$.
		\item if $N \subset \sH$ is a quanternionic Hilbert basis, then the associated set $$N_{\bR}:= \{u, ui,uj, uk \:|\:u \in N\}$$ is  a real Hilbert basis of $\sH_\bR$.
		\item If $A \in \gB(\sH)$, the adjoint $A^* \in \gB(\sH)$ is also the adjoint in $\gB(\sH_\bR)$.
		\item If  $\gB(\sH) \ni A \geq 0$ then $A^*=A \geq 0$ as operator in  $\gB(\sH_\bR)$ and the squared root $\sqrt{A} \in \gB(\sH)$  coincides with that computed in $\gB(\sH_\bR)$.
		\item If $A\in \gB(\sH)$, $|A| \in \gB(\sH)$ coincides with the absolute value computed in $\gB(\sH_\bR)$.
		\item  If $A\in \gB(\sH)$, $A \in \gB_\infty(\sH)$  if and only if $A \in \gB_\infty(\sH_\bR)$.
	\end{enumerate}
\end{lemma}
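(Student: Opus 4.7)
The strategy is to exploit the fact that $\sH$ and $\sH_\bR$ share the same underlying set, the same norm (since $\|x\|^2 = \langle x|x\rangle = (x|x)$), and related scalar products via $(x|y) = Re\langle x|y\rangle$; each of the six claims will then translate between the two structures in an essentially automatic way. For (\ref{lemma}a), an $\bR$-linear map $A$ is $\bH$-linear iff $A(xq) = (Ax)q$ for every $q \in \bH$. Since $\{1,i,j,k\}$ is a real basis of $\bH$ and the condition is trivial at $q=1$, this reduces exactly to $A$ commuting with right multiplication by $i$, $j$, $k$; boundedness coincides on the two sides because the norms agree. For (\ref{lemma}b), a direct computation shows $N_\bR$ is orthonormal: for $u \in N$ and $p,p' \in \{1,i,j,k\}$ one has $(up\,|\,up') = Re(\overline{p}\langle u|u\rangle p') = Re(\overline{p}p')$, which is $1$ when $p=p'$ and $0$ otherwise, while for $u \neq v$ in $N$ the vanishing $\langle u|v\rangle = 0$ gives $(up\,|\,vp')=0$. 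For completeness, if $(x|n)=0$ for every $n \in N_\bR$, then $Re(\overline{p}\langle u|x\rangle)=0$ for $p\in\{1,i,j,k\}$; these four identities extract the four real components of $\langle u|x\rangle$, forcing $\langle u|x\rangle=0$ for all $u\in N$, and hence $x=0$ since $N$ is a quaternionic basis.

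For (\ref{lemma}c), the chain $(Ax|y) = Re\langle Ax|y\rangle = Re\langle x|A^*y\rangle = (x|A^*y)$ shows that the quaternionic adjoint of $A$ is simultaneously the real adjoint when $A$ is viewed as an operator on $\sH_\bR$. For (\ref{lemma}d), if $A\geq 0$ in $\gB(\sH)$ then $\langle x|Ax\rangle$ is a non-negative real (positivity in a quaternionic Hilbert space already forces self-adjointness, as noted just below point (C) above), so $(x|Ax)\geq 0$ in $\sH_\bR$; the square root $\sqrt{A}$ computed inside $\gB(\sH)$ lies in $\gB(\sH)\subset\gB(\sH_\bR)$ by (\ref{lemma}a), is self-adjoint by (\ref{lemma}c), is $\bR$-positive by the same real-part argument, and squares to $A$, so by uniqueness of the positive square root in a real Hilbert space it coincides with the one computed in $\gB(\sH_\bR)$. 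For (\ref{lemma}e), since $A^*$ agrees in both settings by (\ref{lemma}c), $A^*A$ is the same operator in both, and applying (\ref{lemma}d) to $A^*A\geq 0$ yields $|A| = \sqrt{A^*A}$ independently of which Hilbert-space structure is used.

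For (\ref{lemma}f), the essential observation is purely topological: $\sH$ and $\sH_\bR$ are the same metric space, so the closed unit ball and the notion of precompact subset are literally identical in the two settings. Since $A\in \gB(\sH)$ is in particular a bounded $\bR$-linear map by (\ref{lemma}a), the statements ``the image of the unit ball of $\sH$ has compact closure in $\sH$'' and ``the image of the unit ball of $\sH_\bR$ has compact closure in $\sH_\bR$'' are the same assertion about the same subset. The main subtlety I anticipate is the bookkeeping in (\ref{lemma}b) and (\ref{lemma}d), where one must be careful that right multiplication by quaternions interacts correctly with the sesquilinearity convention $\langle x|yq\rangle = \langle x|y\rangle q$; once one commits to $(x|y) = Re\langle x|y\rangle$ and uses $Re(qq') = Re(q'q)$, all computations reduce to straightforward manipulations of the quaternion basis.
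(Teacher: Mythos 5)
Your proof is correct and follows essentially the same reduction strategy as the paper's: every claim is translated between $\sH$ and $\sH_\bR$ by exploiting the shared underlying set, the shared norm, and the relation $(x|y)=Re\langle x|y\rangle$. The only (inessential) divergences are in (\ref{lemma}b), where you verify orthonormality plus totality of $N_\bR$ while the paper checks the Parseval identity directly, and in (\ref{lemma}f), where you use the precompact-image characterization of compactness while the paper uses the sequential one --- in both cases these are standard equivalent formulations and the arguments are interchangeable.
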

\begin{proof}
(\ref{lemma}a) is evident. (\ref{lemma}b) is true because $N_{\bR}$ is $(\:|\:)$-orthonormal  and 
$$
||x||^2 = \sum_{u \in N} |\langle u|x\rangle|^2 = 
 \sum_{u \in N} |( u|x)|^2 + |(ui|x)|^2 +|(uj|x)|^2+|(uk|x)|^2\quad\mbox{if}\quad x \in \sH = \sH_\bR\:.
$$

(\ref{lemma}c) If $A \in \gB(\sH)$, the adjoint  $A^* \in \gB(\sH)$ is also the adjoint in $\gB(\sH_\bR)$ because 
$\langle A^*x|y\rangle = \langle x|Ay\rangle$ for $x,y \in \sH$ implies $(A^*x|y) =( x|Ay)$ for $x,y \in \sH_\bR= \sH$ and this identity completely defines the adjoint operators for elements of $\gB(\sH_\bR)$. 

(\ref{lemma}d) and (\ref{lemma}e) If  $\gB(\sH) \ni A \geq 0$ then
$A^*=A$. Moreover, since $A\ge 0$, it holds in particular that $\langle x|Ax\rangle \in\bR$ and so $(x|Ax)=\Re\langle x|A x\rangle  =\langle x|Ax\rangle\ge 0$ for all $x\in\sH=\sH_\bR$, so $A^*=A\geq 0$  in $\gB(\sH_\bR)$ and, there, $A$ admits positive squared root. By uniqueness of the positive squared root, $\sqrt{A} \in \gB(\sH)$ of (self-adjoint) positive operators   coincides with that computed in $\gB(\sH_\bR)$. Consequently, $|A| = \sqrt{A^*A}\in \gB(\sH)$ coincides with the absolute value computed in $\gB(\sH_\bR)$.

(\ref{lemma}f)  $A\in \gB_\infty(\sH)$ means by definition that every sequence $\{Ax_n\}_{n\in \bN}$ admits a converging subsequence if 
 $\{x_n\}_{n \in \bN} \subset \sH$ is bounded. Since the norms of $\sH$ and $\sH_\bR$ coincides, then $\sH=\sH_\bR$ as a set, and $A \in \gB(\sH_\bR)$. The thesis follows. 
\end{proof}

(a) (For $\sH$ quaternionic.) Suppose that (\ref{defT}) is true. Passing to the Hilbert basis $N_\bR$ of $\sH_\bR$ and using  (\ref{lemma}b) and  (\ref{lemma}e) we have
\begin{equation}\label{4traces} 
\sum_{u \in N} \langle u||T|u\rangle = \frac{1}{4} \sum_{v \in N_\bR} ( v||T|v)\:.
\end{equation}
Since (a) is valid in real Hilbert spaces, (\ref{defT})  must be in particular valid for any other basis $N'_\bR$ constructed out of a Hilbert basis $N'$ of $\sH$ and
$$+\infty > \sum_{u \in N} \langle u||T|u\rangle = \frac{1}{4} \sum_{v \in N_\bR} ( v||T|v) 
= \frac{1}{4} \sum_{v' \in N'_\bR} ( v'||T|v') = \sum_{u' \in N} \langle u'||T|u'\rangle\:.$$
We finally observe that for $A \in \gB(\sH)$, identity (\ref{4traces}) and the validity of (a) in the real case prove  $A \in \gB_1(\sH)$ if and only if $A \in \gB_1(\sH_\bR)$.

(b) (For $\sH$ quaternionic.)
Suppose that $A \in \gB_1(\sH)$, we prove that (i) and (ii) are true. The proof of (a) implies  $A \in \gB_1(\sH_\bR)$, so that, since (b) is valid in the real case, $A \in \gB_\infty(\sH_\bR)$. Finally (\ref{lemma}f) implies that $A \in \gB_\infty(\sH)$ and (i) holds. The validity of (ii) immediately follows from Proposition \ref{propdecT} since $|A|$ is compact because $A$ is (Proposition \ref{prop1}). To conclude, let us prove that (i),(ii) imply $A \in \gB_1(\sH)$. This immediately arises by applying Proposition \ref{propdecT} ($|A|$ is compact because $A$ is) and proving that (\ref{defT}) is true for a Hilbert basis of $\sH$ which completes the Hilbert basis of $Ker(A)^\perp$ existing for the said proposition. The proof of (b) is over. 

(c) (For $\sH$ quaternionic.) Again, the identity arises by direct application of Proposition \ref{propdecT} and (a), (b).

(d) (For $\sH$ quaternionic.) (i) Suppose that $a,b \in \bR$ and $A,B \in \gB_1(\sH)$. $A,B \in \gB_1(\sH_\bR)$ form the last assertion in the proof of (a) above. Since (i) holds true in the real case, $aA+bB \in \gB_1(\sH_\bR)$. The last statement in the proof of (a) proves that  
$aA+bB \in \gB_1(\sH)$. (ii) can be proved similarly exploiting (\ref{lemma}c) and the validity of (ii) in the real case. (iii) We already know that $\gB_1(\sH_\bR)$ is Banach referring to the corresponding norm here denoted by $||\:||_1^{\bR}$. Consider a Cauchy sequence
$\{A\}_{n \in \bN} \subset \gB_1(\sH) \subset  \gB_1(\sH_\bR)$. Since $||A||_1 = \frac{1}{4}||A||^{\bR}_1$ from (\ref{4traces}) and (c), the sequence is also Cauchy in $\gB_1(\sH_\bR)$ and thus there is $A \in \gB_1(\sH_\bR)$ such that $A_n \to A$ in the norm
$||\:||_1^{\bR}$. However, from (e)(iii) whose proof does not depend on this argument, the convergence of the sequence is also valid referring to the operator norm $||\:||$ which is the same for the real and the quaternionic Hilbert space. In particular is also holds point-wise. Since every $A_n$
in $\gB(\sH_\bR)$ commutes with the three $\bR$-linear maps $\sH_\bR \ni x \mapsto xi\in \sH_\bR$, $\sH_\bR \ni x \mapsto xj\in \sH_\bR$ and $\sH_\bR \ni x \mapsto xk \in\sH_\bR$, also $A$ does and thus $A \in \gB_1(\sH)$ for {\bf (\ref{lemma}a)} and the last assertion in the proof of (a) above. This concludes the proof of (d). 

(e) (For $\sH$ quaternionic.)   The proof of (i) and (ii) immediately arises with the same argument exploited proving (d), using again $||A||_1 = \frac{1}{4}||A||^{\bR}_1$, the fact that (i) and (ii) are valid in the real Hilbert space case and the invariance of the norm $||A||$ when passing from $\gB(\sH)$ to $\gB(\sH_\bR)$. Let us conclude the proof by establishing the validity of (iii). Take $A\in \gB_1(\sH)$. Since $||Ax||^2 = \langle x|A^*Ax\rangle = || \:|A|x||^2$ so that $||A||= |||A|||$, proving the thesis amounts to prove that $|| \:|A|\: || \leq ||A||_1$. Since $|A|$ is compact  selfadjoint there is a Hilbert basis $M$ of $\sH$ made of eigenvectors $u$ with eigenvalues $s(u)\geq 0$
(since $|A|\geq 0$) according to Proposition \ref{propdecT} (it is sufficient to complete the basis $N$ introduced there). If $x\in \sH$,
$x = \sum_{u\in M} u x_u$ for quaternions $x_u \in \bH$. Therefore
$$||A||^2 = ||\:|A|\:||^2= \sup_{||x||=1}|| \:|A|\: x||^2 = \sup_{||x||=1}\left\langle \sum_{v\in M} v x_v  \left||A|^2 \sum_{u\in M} u x_u \right.\right\rangle =
\sup_{||x||=1} \sum_{u\in M} |x_u|^2 s(u)^2\:.$$
Since $1=||x||^2= \sum_{u\in M} |x_u|^2$, it must hold $|x_u|\leq 1$ and we can write
$$||A||^2\leq \sup_{||x||=1} \sum_u  s(u)^2=   \sum_u  s(u)^2 \leq \left(\sum_u  s(u)\right)^2 = ||A||^2_1\:,$$
where we  exploited $s(u) \geq 0$ and we made use of (c) in evaluating $||A||_1$ with respect to the said basis $M$ of eigenvectors of $|A|$.
\end{proof}

\begin{proof}[{\bf Proof of Proposition \ref{propTRACE}}]
	 Consider a Hilbert basis $N\subset \sH$ and a finite
	 subset $F\subset N$. We have, for $A \in \gB_1(\sH)$
using its polar decomposition $A=U|A|$,
$$\sum_{u\in F} |\langle u|A u\rangle| = \sum_{u\in F} |\langle u|U \sqrt{|A|}\sqrt{|A|} u\rangle| = 
 \sum_{u\in F} |\langle  \sqrt{|A|} U^* u|\sqrt{|A|} u\rangle| \leq \sum_{u\in F} ||\sqrt{|A|} U^* u||\:||\sqrt{|A|} u|| 
$$ $$ \leq \sqrt{\sum_{u\in F} ||   \sqrt{|A|} U^* u||^2} 
 \sqrt{\sum_{u\in F} ||   \sqrt{|A|} u||^2} = \sqrt{\sum_{u\in F}  \langle u|U|A|U^* u\rangle}\sqrt{\sum_{u\in F}  \langle u||A| u\rangle}\:.$$
Notice that $U|A|U^*$ is selfadjoint and positive, furthermore it belongs to $\gB_1(\sH)$ in view of (d) Proposition \ref{propTRACE1}. In summary, we have found that
$$\sum_{u\in F} |\langle u|A u\rangle| \leq \sqrt{\sum_{u\in F}  \langle u|U|A|U^* u\rangle}\sqrt{\sum_{u\in F}  \langle u||A| u\rangle} \leq \sqrt{|| U|A|U^*||_1}\sqrt{ ||A||_1}\:. $$
Taking advantage of (e) Proposition \ref{propTRACE1}, noticing that $||U||=||U^*||\leq 1$ since $U$ is a partial isometry, we have
$$\sum_{u\in F} |\langle u|A u\rangle| \leq  ||A||_1<+\infty\:.$$
Since this result holds true for every finite subset $F\subset N$, we conclude that 
$$\sum_{u\in N} |\langle u|A u\rangle| := \sup\left\{\left.\sum_{u\in F} |\langle u|A u\rangle| \:\right|\: \mbox{$F$ finite $\subset N$}\right\}\leq  ||A||_1<+\infty\:,$$
Thus only an countably at most number of elements  $ |\langle u|A u\rangle|$ do not vanish and the sum can computed as a standard series. As a consequence, the series  $\sum_{u\in N} \langle u|A u\rangle$  in $\bD$ absolutely converges and therefore can be re-ordered arbitarily without changing its sum.

(a) (i), (ii) and (iii) are easy consequences of the given definitions.

(b)  (i) and (ii)  are standard results. E.g., see Proposition 4.38 \cite{M} for the complex case and the real case can be identically proved. 

(c) Suppose that (i) holds. Define $N'$ out of $N$ just replacing  the element $u_0\in N$ for $u_0q \in N'$ for a given $q\in \bH$ with $|q|=1$ and keeping all remaining elements.
By direct computation $0= tr_{N'}(A)-tr_N(A) = \overline{q} \langle u_0|Au_0\rangle q -\langle u_0|Au_0\rangle$. Namely,
$\langle u_0|Au_0\rangle q = q\langle u_0|Au_0\rangle$.
Since $q$ is arbitrary, we have $\langle u_0|Au_0\rangle \in \bR$. Every unit vector $u_0 \in \sH$ can be completed to a Hilbert basis and therefore  $\langle x|Ax\rangle = \overline{\langle x|Ax\rangle} = \langle Ax|x\rangle$ for $x\in \sH$. In summary $\langle x| (A-A^*)x\rangle =0$ for every $x\in \sH$. By polarization $A=A^*$. Suppose conversely that $A^*=A \in \gB_1(\sH)$. Since $A \in \gB_\infty(\sH)$ for (b) Proposition \ref{propTRACE1}, we can decompose $A$ along a Hilbert basis of eigenvectors $M$ (completing the Hilbert basis $N \subset Ker(A)^\perp$ in Proposition \ref{propdecT}),
\begin{equation}
A = \sum_{u\in M} u s(u) \langle u| \:\: \rangle  \label{decAM}\:,
\end{equation}
 and in particular using $M$ to compute the trace, with $s(u)=0$ if $u\in ker(A)$:
$$
tr_M(A) = \sum_{u\in M}\langle u|A u\rangle = \sum_{u\in M} s(u)\:.
$$
As shown in the first part of this very proof this sum is absolutely convergent. In particular, since $s(u)=\langle u| A u\rangle$ for every $u\in M$, it holds that
$$
\sum_{u\in M} |s(u)| < +\infty\:.
$$
Next consider another Hilbert basis $B\in \sH$, and from (\ref{decAM}) we easily find
$$\langle v| Av \rangle = \sum_{u\in M} s(u) |\langle v|u \rangle|^2 \quad \forall v \in B\:,$$
so that
\begin{equation}\label{swap}
\sum_{v\in B}\langle v| Av \rangle = \sum_{v\in B}\sum_{u\in M} s(u) |\langle v|u \rangle|^2 \quad \forall v \in B\:.
\end{equation}
If we were allowed to swap the two summations in the right-hand side of (\ref{swap}), we would obtain
\beq tr_B(A) =\sum_{v\in B}\langle v| Av \rangle =\sum_{u\in M}  s(u) \sum_{v\in B} |\langle v|u \rangle|^2 = \sum_{u\in M}  s(u) ||u||^2 = \sum_{u\in M}  s(u) = tr_M(A)\:,\label{END}\eeq
proving that the trace does not depend on the used Hilbert basis and concluding the proof.
To prove that those summations can be in fact interchanged, first observe that, in double summation in  (\ref{swap}), $u$
varies in a set  at most countable according to  Proposition \ref{propdecT}): Only the elements with eigenvalue $s(u)\neq 0$ give contribution and every eigenspace but the kernel of $A$ have finite dimension. These $u$ are exactly the elmentents of the countable Hilbert basis $N$ of $Ker(A)^\perp$.  Similarly, only a set at most countable of elements $v$ gives contribution to the second sum (for every $u$, the set of non-vanishing Fourier coefficients along $B$ is at most countable). Summing up, the double sum in (\ref{swap}) can be intepreted as an iterated integral with respect the product measure over  $N \times B'$ of a pair of $\sigma$-finite counting measures.
$B'$ is countable as it is defined as the union of the supports (including an at most countable set of elements) of the functions $f_u : B \ni v \mapsto |\langle v|u \rangle|^2$ for  $u$ varying in the at most countable set $N$.
 The integrals can be interchanged (Fubini) provided the function
$N \times B' \ni (u,v) \mapsto s(u) |\langle v|u \rangle|^2 $
 is absolutely integrable with respect the product measure. In turn, this is equivalent (Tonelli theorem) to requiring that one of the two iterated integrals, either 
$$
\sum_{u\in N}\sum_{v\in B'} |s(u)| |\langle v|u \rangle|^2< +\infty \quad \mbox{or}\quad \quad \sum_{v\in B'} \sum_{u\in N} |s(u)| |\langle v|u \rangle|^2<+\infty\:.
$$
 This is the case, indeed, $M\times B \supset N\times B'$ and
$$
\sum_{u\in M}\sum_{v\in B} |s(u)| |\langle v|u \rangle|^2 =\sum_{u\in M} |s(u)| \sum_{v\in B}  |\langle v|u \rangle|^2 = \sum_{u\in M} |s(u)|  <+\infty\:.
$$
Therefore (\ref{END}) is valid, concluding the proof of  (c).

(d) We have to consider the case $\bD=\bH$, since the validity of (i) and (ii) for $\bD=\bR, \bC$ immediately follows from (b).

 (i) (For $\bD=\bH$) It arises from (a) and (c):  $2Re(tr_N(A)) = tr_N(A)+ \overline{tr_N(A)} = tr_N(A)+ tr_N(A^*) = tr_N(A+A^*)$ does not depend on $N$ since $A+A^*$ is selfadjoint. 
 
(ii) (For $\bD=\bH$) We consider the real Hilbert space $\sH_\bR$ as in Lemma \ref{lemma}. 
According to  (\ref{lemma}b) in  Lemma \ref{lemma}, if $N$ is a Hilbert basis of $\sH$, 
$N_\bR$ is a Hilbert basis of the real Hilbert space $\sH_\bR$ equipped with the real scalar product $(\:\:|\:\:) = Re(\langle \:\:|\:\:\rangle)$.
In view of the identity (\ref{4traces}), if $AB \in \gB_1(\sH)$, then  $AB \in \gB_1(\sH_\bR)$ and the same fact holds true for $BA$, so we can compute the trace $tr_{N_\bR}(AB)$ and $tr_{N_\bR}(BA)$  in the Hilbert space $\sH_\bR$ where the cyclic property is valid for (b). Using the real scalar product $(\:\:|\:\:) = Re(\langle \:\:|\:\:\rangle)$ of $\sH_\bR$ and continuity of the function $Re$, we therefore have
$$\sum_{z\in N} Re(\langle z|ABz \rangle)+ \sum_{z\in N} Re(\langle zi|AB zi \rangle)+ \sum_{z\in N} Re(\langle zj|AB zj \rangle)+\sum_{z\in N} Re(\langle zk|AB zk \rangle)$$
$$ = tr_{N_\bR}(AB) = tr_{N_\bR}(BA) =$$
$$=\sum_{z\in N} Re(\langle z|BA z \rangle)+ \sum_{z\in N} Re(\langle zi|BA zi \rangle)+ \sum_{z\in N} Re(\langle zj|BA zj \rangle)+\sum_{z\in N} Re(\langle zk|BA zk \rangle)\:.$$
Next observe that, since $Re(qq')= Re(q'q)$, it holds that $Re(\langle zi|T zi \rangle) = -Re(i(\langle z|T z \rangle i)) =
-Re(i\:  i\langle z|T z \rangle) = Re(\langle z|T z \rangle)$. Similarly 
$Re(\langle jz|T jz \rangle)= Re(\langle kz|T kz \rangle) = Re(\langle z|T z \rangle)$ and therefore, the long identity written above can be rephrased into
$$4\sum_{z\in N} Re(\langle z|ABz \rangle) = tr_{N_\bR}(AB) = tr_{N_\bR}(BA) = 4\sum_{z\in N} Re(\langle z|BA z \rangle)\:.$$
In summary, $Re(tr_N(AB))= Re(tr_N(BA))$.

Let us prove point (e). 
Let $N$ a Hilbert basis as in the hypotheses obtained by completing a Hilbert basis $N_e$ of $Ker(A)^\perp$ made of eigenvectors of $A$. Then we have
\begin{equation}
\begin{split}
tr_N(BA)&=\sum_{u\in N}\langle u|BAu\rangle =\sum_{u\in N_e}\langle u|Bu\rangle s(u)=\sum_{u\in N_e}\langle Au|Bu\rangle =\sum_{u\in N}\langle Au|Bu\rangle=\\
&=\sum_{u\in N}\langle u|ABu\rangle=tr_N(AB)\:,
\end{split}
\end{equation}
where $s(u)\in\bR$ is the eigenvalue associated with the eigenvector $u\in N_e$. In particular, if $B=B^*$, then $\overline{tr_N(AB)}=tr_N((AB)^*)=tr_N(BA)=tr_N(AB)$, and so $tr_N(AB)\in\bR$.

To conclude let us prove point (f).  The condition $A\ge B$ means that $\langle x|Ax \rangle - \langle x|Bx\rangle \ge 0$, thus in particular $Re \langle x|Ax\rangle \ge Re\langle x|Bx \rangle$, for all $x\in\sH$. So, consider any Hilbert basis $N\subset \sH$, then
$$
Re(tr_N(A))=\sum_{z\in N}Re \langle z|Az \rangle \ge \sum_{z\in N}Re \langle z|Bz \rangle =Re(tr_N(B))\:.
$$
\end{proof}

\noindent{\bf Proof of Proposition \ref{proptraceeq}}. (i) implies (ii) as established with the first statement of Proposition \ref{propTRACE}. The fact that (ii) entails (i) if $\bD=\bC$ is a known result (e.g., see Proposition 4.41 in \cite{M}). Failure of the implication (ii) $\Rightarrow$ (i) in the real case is evident considering $A: \sH \to \sH$ with $AA=-I$ and $A^*=A$ (such an  operator can be constructed easily referring to a Hilbert basis of an infinite dimensional real Hilbert space) so that $|A|=I$ and $\sum_{u\in N} \langle u||A|u\rangle = +\infty$ falsifying (i), but 
$\langle u|Au\rangle = -\langle Au|u\rangle = -\langle u|Au\rangle$ leads to 
$\sum_{u\in N} \langle u|A u\rangle = 0$, satisfying (ii).
 To conclude, we prove that  (ii) $\Rightarrow$ (i) for $\bD=\bH$. Suppose that $A\in \gB(\sH)$ satsfies (ii). Define $B := (A+ A^*)/2$ and $C= (A-A^*)/2$. Since $|\langle x|(A\pm A^*) x\rangle| = |\langle x|A x\rangle \pm  \overline{\langle x|A x\rangle}| \leq 2|\langle x|A x\rangle|$, both $B$ and $C$ satisfy (ii). Since $\gB_1(\sH)$ is a real vector space ((d) Proposition \ref{propTRACE1}), it is sufficient to prove that $B$ and $C$ satisfy (i)  to conclude. Regarding the selfadjoint operator $B$, the fact that it satisfies (i) from (ii) can be proved passing to its integral spectral decomposition (using the fact that the spherical spectrum is completely included in $\bR$ since $B=B^*$) and following the same route as that used to prove Proposition 4.41 in \cite{M} for $T=T^*$. Let us pass to $C$. Due to Theorem 5.9 in \cite{GMP1} (specialised to $T=C$), we can write  $C=J|C|$ for some $J \in \gB(\sH)$ with $JJ = -I$ and $J^*=-J$
such that $JC=CJ$ and $J|C|=|C|J$. As a consequence  (\cite{GMP1} Sect.3.3), the complex Hilbert space $\sH_{J\imath} := \{u \in \sH \:|\: Ju =u\imath\}$ equipped with the restriction of the scalar product of $\sH$ is invariant under $C$. If $N\subset \sH_{J\imath} $ is a Hilbert basis, it is also a Hilbert basis of $\sH$ in view of (f) Proposition 3.8 in \cite{GMP1}. So,
condition (ii) for $C$ specializes to  $\sum_{u\in N} |\langle u|J|C|u\rangle| < +\infty$ for the said simultaneous Hilbert  basis of $\sH_{J\imath}$ and $\sH$. From the definition of   $\sH_{J\imath}$,
$\sum_{u\in N} |\imath\langle u||C|u\rangle| < +\infty$, which can be written as $\sum_{u\in N} \langle u||C|u\rangle < +\infty$ so that $C\in \gB_1(\sH)$ by definition, concluding the proof.  $\Box$\\

\noindent{\bf Proof of Proposition \ref{lastPROP}}.
Consider a Hilbert basis $N$ of $\sH$ such that $N \subset \sH_{J\imath}$. We have
$$tr_N(A)= \frac{1}{2}tr_N(A+A^*) +  \frac{1}{2}tr_N(A-A^*) = Re(tr_N(A)) +  \frac{1}{2}tr_N(J|A-A^*|)\:.$$
We have proved that
$$tr_N(A) = tr^\bR(A)+   \frac{1}{2}\sum_{z\in N}\langle z|J|A-A^*|z\rangle
= tr^\bR(A)-  \frac{1}{2}\sum_{z\in N}\langle Jz||A-A^*|z\rangle \:.$$
Since $Jz=z\imath$ and $|A-A^*|$ is selfadjoint so that $\langle z||A-A^*|z\rangle  \in \bR$,
$$tr_N(A) = tr^\bR(A)+  \frac{\imath}{2}\sum_{z\in N}\langle z||A-A^*|z\rangle 
=tr^\bR(A)+  \frac{\imath}{2}tr^\bR(|A-A^*|) = tr^\bR(A)+  \frac{\imath}{2}tr(|A-A^*|) \:,$$
where we have used (e) Corollary \ref{corollary}.
Eventually, observe that any change of $J$ on $Ker(A-A^*)= Ker(|A-A^*|)$ does not affect the result because it would only change vanishing  terms $\langle z|J|A-A^*|z\rangle$ in view of $|A-A^*|z=0$. $\Box$\\

\noindent{\bf Proof of Lemma \ref{lemmatecnico}}.
Let $N$ be as in the hypotheses, and consider the decomposition
	$$
	1=\sum_{n\in N}p_nq_n= (1-p_0)\left[\sum_{n=1}^N\frac{p_n}{1-p_0}q_n\right]+p_0q_0\:.
	$$
	Notice the following inequalities
	\begin{equation}
	0<\frac{p_i}{1-p_0}< \sum_{n=1}^N\frac{p_n}{1-p_0}=\frac{1}{1-p_0}\sum_{n=1}^Np_n=\frac{1}{1-p_0}(1-p_0)=1\:,
	\end{equation}
	from which we also have
	\begin{equation}
	0\le q:=\sum_{n=1}^N\frac{p_n}{1-p_0}q_n\le \sum_{n=1}^N\frac{p_n}{1-p_0}=1\:.
	\end{equation}
	Thus we reduce to $(1-p_0)q+p_0q_0=1$ with $p_0\in (0,1)$ and $q_0,q\in [0,1]$. Assume by contradiction that $q\neq q_0$ and, without loss of generality, suppose that $q_0>q$. The identity above can be rewritten as $q-p_0q+p_0q_0=1$, i.e. $p_0(q_0-q)=1-q$. Thus, from 
	$
	(1-q)(q_0-q)^{-1}=p_0 < 1
	$
	we get $1-q<q_0-q$ which is equivalent to $1<q_0$, in turn impossible. Thus $q=q_0$ and more precisely $1=(1-p_0)q+p_0q_0=q_0=q$. 
	Repeating the argument on the sums $\sum_{n=1}^N\frac{p_n}{1-p_0}=\sum_{n=1}^N\frac{p_n}{1-p_0}q_n=1$ we get $q_1=1$. By induction we get $q_n=1$ for all $n$. $\Box$

\end{document}